\DeclareMathOperator*{\argmax}{argmax}
\theoremstyle{plain}
\newtheorem{theorem}{Theorem}[section]
\newtheorem{corollary}[theorem]{Corollary}
\newtheorem{prop}[theorem]{Proposition}
\theoremstyle{definition}
\newtheorem{definition}[theorem]{Definition}
\theoremstyle{remark}
\begin{document}


\title{Structural clustering of volatility regimes for dynamic trading strategies}

\author{
\name{Arjun Prakash\textsuperscript{a}$^\ast$, Nick James\textsuperscript{b}$^\ast$, Max Menzies\textsuperscript{c}$^\ast$\thanks{Contact Max Menzies. Email: max.menzies@alumni.harvard.edu} \thanks{$^\ast$Equal contribution} and Gilad Francis\textsuperscript{d}}
\affil{\textsuperscript{a}School of Computer Science, University of Sydney, NSW, Australia; \textsuperscript{b}School of Mathematics and Statistics, University of Melbourne, VIC, Australia; \textsuperscript{c}Yau Mathematical Sciences Center, Tsinghua University, Beijing, China; \textsuperscript{d}School of Mathematics and Statistics, University of Sydney, NSW, Australia. }
}

\maketitle

\begin{abstract}
We develop a new method to find the number of volatility regimes in a nonstationary financial time series by applying unsupervised learning to its volatility structure. We use change point detection to partition a time series into locally stationary segments and then compute a distance matrix between segment distributions. The segments are clustered into a learned number of discrete volatility regimes via an optimization routine. Using this framework, we determine a volatility clustering structure for financial indices, large-cap equities, exchange-traded funds and currency pairs. Our method overcomes the rigid assumptions necessary to implement many parametric regime-switching models, while effectively distilling a time series into several characteristic behaviours. Our results provide significant simplification of these time series and a strong descriptive analysis of prior behaviours of volatility. Finally, we create and validate a dynamic trading strategy that learns the optimal match between the current distribution of a time series and its past regimes, thereby making online risk-avoidance decisions in the present.
\end{abstract}

\begin{keywords}
Volatility modelling; regime-switching model; change point detection; spectral clustering; trading strategies
\end{keywords}

\section{Introduction}
\label{intro}

Modelling the volatility of a financial asset is an important task for traders and economists. Volatility may be modelled from an individual stock level to an index level; the latter can represent the uncertainty or systemic risk of an entire sector or economy. Financial markets are both significant in their own right and have substantial flow-on effects on the rest of society, as seen during the global financial crisis, US-China trade war, and COVID-19 pandemic.

Statistical methods for volatility modelling have long been popular in the literature \citep{Shah2019, Kirchler2007, Baillie2009}. Long-standing parametric methods such as ARCH \citep{Engle1982,Zakoian1994} and GARCH  \citep{Bollerslev1986} model the volatility of individual stocks, obeying assumptions such as \emph{stylized facts} \citep{Guillaume1997}, and appropriately choosing parameters to best fit past data. These methods allow traders to model future returns, provided that their founding assumptions continue to hold. Volatility may also be studied indirectly through options pricing \citep{Gonalves2006,Cont2002,Tompkins2001}.

For many years, statisticians have noted that many real-world processes may exhibit significant nonstationarity. These switches in behaviour may disrupt the use of parametric or nonparametric models to predict future behaviour and so must be accounted for. For this purpose, a broad literature has been developed that aims to study nonstationary time series by partitioning them into locally stationary segments. Early research in a purely statistical context was carried out by \citep{Priestley1965,Priestley1973,OzakiTong,Tong1980} with significant advances by \citet{Dahlhaus1997}.

Such concepts have been applied to financial time series to develop \emph{regime-switching models}. These combine the aforementioned statistical techniques with the observation that financial assets exhibit such switching patterns, moving between periods of heightened volatility and ease \citep{Hamilton1989, Lavielle, Lamoureux1990}. Numerous regime-switching models have been developed to model such patterns \citep{Hamilton1989,Klaassen2002, Guidolin2007, Yang2018JEDC, deZeeuw2012}; these are also generally parametric, building on ARCH and GARCH, and must \textit{a priori} specify the number of regimes and underlying distributions. Typically, the number of regimes is assumed to be two \citep{Taylor1999,Taylor2005}, one for high volatility and one for low volatility \citep{Guidolin2011}. However, market crises such as the global financial crisis or COVID-19 may require more flexibility in volatility modelling, through varied assumptions regarding both individual distributions and the number of regimes \citep{Arouri2016, Song2016,Balcombe2017, Carstensen2020, CerboniBaiardi2020, Campani2021}. For example, \citet{Baba2011} describes three regimes in the VIX index from 1990-2010: tranquil, turmoil and crisis. Most recent regime-switching models have been statistical in nature, for instance building on the Heston model \citep{Goutte2017,Papanicolaou2013,Song2018}.

This paper introduces a new method from an alternative perspective to analyze the regime-switching structure of a financial time series and use this for risk-avoidance in the future. Our method flexibly determines the number of volatility regimes, making no assumptions about the underlying data generating process. We use the entire history of the time series to learn its fundamental structure over time and how it behaves in volatile and non-volatile periods alike, improving on most parametric methods that only use recent behaviour. By not assuming the number of regimes \textit{a priori}, we ameliorate a typical shortcoming of such models \citep{Ang2012}. Moreover, our determination of the number of volatility regimes fits well within the existing literature as one could conceivably take our determined number for a given time series and input this learned number in an alternative regime-switching model.

Subsequently, we draw on our findings and use additional learning procedures to design a dynamic trading strategy that can identify volatile periods and allocate capital in real time. We learn the volatility structure of the S\%P 500 and determine whether the present period should be avoided by comparing its empirical distribution with past segments and determining the closest match by distance minimization. We show that it provides superior risk-adjusted returns to the S\&P 500 index in various market conditions and suitably avoids the index during periods of higher volatility, generally associated with market downturns. Our procedure is inspired by both machine learning and metric geometry and contains additional optimization relative to previous trading strategies \citep{Nystrup2016}.

In Section \ref{methodology}, we describe our methodology in detail, together with mathematical proof of its efficacy under specific circumstances. In Section \ref{results}, we validate our methodology on synthetic data, comparing several variants, and show a reasonable clustering structure can be determined for major indices, stocks, popular exchange-traded funds (ETFs) and currency pairs. In Section \ref{trading strat}, we introduce our dynamic trading strategy, incorporating our insights and additional learning procedures. Section \ref{conclusion} summarizes our findings and describes future research and applications of our methods.

\section{Mathematical model}
\label{methodology}

In mathematical statistics, a \emph{time series} $(X_t)$ is a sequence of random variables - measurable functions from a probability space to the real numbers - indexed by time. In finance, one generally conflates the random variable with the observed data point at each time. As such, a \emph{financial time series} is a sequence of price data. In this paper, we will examine the time series of closing prices (across a range of assets) $(p_t)_{t\geq 0}$ at time $t$, and the associated log returns $(R_t)_{t\geq 1}=\log ( \frac{p_{t}}{p_{t-1}}) $.

In this section, we describe our method in detail. We begin by assuming our nonstationary time series are generated from Dahlhaus locally stationary processes \citep{Dahlhaus1997} and proceed to partition the time series into stationary segments; specifically, we detect changes in the volatility of a time series via the Mood test change point method. Next, using the empirical cumulative density function of each segment, we use the Wasserstein metric to quantify distance between these distributions. We determine an allocation into an appropriate number of clusters via self-tuning spectral clustering \citep{zelnik2004self}. Thus, we classify our segments of volatility into discrete classes without assuming the number of classes \textit{a priori}. Finally, we record the number of clusters and their structure.

Our methodology differs from usual regime-switching models in that it is inspired by unsupervised learning rather than stochastic volatility models. We use spectral clustering rather than simply testing for low or high volatility in the different segments precisely because two volatility regimes are not suitable in all contexts \citep{Campani2021}. As we study long histories of these time series, rather than recent behaviour for use in probabilistic models, we must account for the possibility of a varying number of regimes between different assets. Periods of ordinary market behaviour, turmoil, and crises, could potentially manifest themselves in three regimes for equities \citep{Baba2011}, but perhaps the behaviour of currency pairs in these times may be quite different.

The precise method that we describe below, applicable to volatility clustering, is not exhaustive. As long as there is consistency between the regime characteristic of interest, the change point algorithm (and its test statistic if applicable), and the distance metric between distributions, the method below could easily be reworked for detection and classification of regimes of alternative characteristics. For example, we could substitute the Jensen–Shannon divergence for the Wasserstein metric, substitute the Bartlett test for the Mood test, or smooth out the empirical distributions via kernel density estimation. Below, we outline the detailed implementation to model volatility and discuss some of the aforementioned alternatives in the process.

\subsection{Partition of the time series}
\label{partition}

Given time series price data, begin by forming the log return time series $(R_t)_{t=1,...,T}$ over a particular time interval. It is generally appropriate to assume the log returns independent random variables with mean close to 0, but not appropriate to assume they have any particular distribution \citep{Ross2013}.

With this in mind, we apply the nonparametric \emph{Mood test}, performed in the CPM package of \citet{Ross_cpm}, to detect changes in the volatility of a time series. Although this is commonly known as a median test, it is also appropriate for detecting change in the variance between two distributions, as described in Section 4 of \citet{mood1954}. More details on the change point framework and our specific implementation can be found in Appendix \ref{cpa appendix}. This yields a collection of change points $\tau_1,...,\tau_{m-1}$. For notational convenience, set $\tau_0=1, \tau_{m}=T$. The stationary segments according to this partition are then
$$(R_t)_{t \in [\tau_{j-1}, \tau_{j}]}, j = 1,2,...,m. $$
This yields $m$ stationary segments. Now let $(Y^{(j)})$ be the restricted time series whose entries are taken from the time interval $[\tau_{j-1}, \tau_{j}]$. That is, $(Y^{(j)}_t)$ consists of the values $R_t$ where $t$ ranges from $\tau_{j-1}$ to $\tau_{j}$. Each $(Y^{(j)})$ has been determined by the algorithm to be sampled from a consistent distribution.

\subsection{Computation of the Wasserstein distance}
\label{Wasserstein}

Next, we compute the Wasserstein distance between the empirical distributions of each stationary segment $(Y^{(j)})$. The Wasserstein metric \citep{Kolouri2017}, also known as the earth mover's distance, is the minimal work to move the mass of one probability distribution into another. Given probability measures $\mu,\nu$ on Euclidean space $\mathbb{R}^d$, we define
\begin{equation*}
    W_{p} (\mu,\nu) = \inf_{\gamma} \bigg( \int_{(x,y)\in \mathbb{R}^d \times \mathbb{R}^d} ||x-y||^p d\gamma(x,y)  \bigg)^{\frac{1}{p}}.
\end{equation*}
This infimum is taken over all joint probability measures $\gamma$ on $\mathbb{R}^d\times \mathbb{R}^d$ with marginal probability measures $\mu$ and $\nu$. In the case where $d=1$, this distance can be computed relatively simply in terms of the cumulative distribution functions associated to the two distributions. Given probability measures $\mu,\nu$ with cumulative distribution functions $F,G$, the distance $W_{p} (\mu,\nu)$ may be computed \citep{DelBarrio} as 
\begin{align*}
    \left(\int_{\mathbb{R}} |F - G|^p dx\right)^\frac{1}{p}.
\end{align*}
This allows us to form a $m \times m$ distance matrix of Wasserstein distances
$$D_{ij}=W_p(\mu_i, \mu_j)=W_p(F_i,F_j), i,j=1,...,m$$
between the $m$ distributions of each locally stationary segment of the log return time series. In our implementation, we set $p=1$.

\subsection{Implementation of spectral clustering}
\label{spectral clustering}

In this section, we cluster the distributions by applying spectral clustering directly to the matrix $D$. This is a natural choice - alternative methods such as K-means would require the data to lie in Euclidean space. Spectral clustering often proceeds with the number of clusters $k$ chosen \textit{a priori} - we explore two methods for making this selection.

We proceed according to the self-tuning spectral clustering methodology proposed by \citet{zelnik2004self}. From the distance matrix $D$, an affinity matrix $A$ is defined according to 
$$A_{i,j} = \exp\left(\frac{-D_{ij}^2}{\sigma_i \sigma_j}\right), $$
where $\sigma_i$ are parameters to be chosen. In their original implementation, \citet{zelnik2004self} select $\sigma_i=D_{i,7}$ to be the $K=7$th neighbour of the point $i$. To adapt this method to our needs, since occasionally there are fewer than 7 segments, we follow \citet{knn} and set $K=\lceil\sqrt{m}\rceil$, where $m$ is the number of segments and set $\sigma_i=D_{i,K}$.

Next, one forms the \emph{Laplacian} $L$ and \emph{normalized Laplacian} $L_{\text{sym}}$ following \citet{Luxburg2007}. We define the diagonal degree matrix $\text{Deg}_{ii}= \sum_j A_{ij}$, and then 
\begin{align*}
    L&=\text{Deg}- A;\\
    L_{\text{sym}}&=\text{Deg}^{-1/2}L\text{Deg}^{-1/2}.
\end{align*}
$L,L_\text{sym}$ are $m \times m$ symmetric matrices, and hence are diagonalizable with all real eigenvalues. By the definition of $L$ and the normalization $L_\text{sym}$, all their eigenvalues are non-negative, $0=\lambda_1 \leq ... \leq \lambda_m$.

We now describe two methods of determining the number of clusters $k$, which we will empirically investigate in Section \ref{results}. The first method is that described by \citet{zelnik2004self}. Obtaining the number of clusters proceeds by varying the number of eigenvectors $c \in [1,m]$:
\begin{enumerate}
  \item Find $x_1, ..., x_c$ the eigenvectors of $L$ corresponding to the $c$ largest eigenvalues. Form matrix $X = [x_1, ..., x_c] \in \mathbb{R}^{n \times c}$. 
  \item Recover the rotation $\mathcal{R}$ which best aligns $X$’s columns with the canonical coordinate system using the incremental gradient descent scheme. 
  \item Grade the cost of the alignment for each cluster number, $c \in [1,m]$, according to \cite[p. 5, Eq 3]{zelnik2004self}. 
\end{enumerate}
The final number of clusters $k^{ZP}$ is chosen to be the value of $c$ with the minimal alignment cost. The second method chooses $k$ to maximize the eigengap between successive eigenvalues. We denote this as $k^e$, defined by $$k^e= \argmax_{c=1,...,m} \lambda_c - \lambda_{c-1}.$$

Having determined the number of clusters $k$, spectral clustering proceeds as follows. We compute the normalized eigenvectors $u_1,...,u_k$ corresponding to the $k$ smallest eigenvalues of $L_\text{sym}$. We form the matrix $U \in \mathbb{R}^{m \times k}$ whose columns are $u_1,...,u_k$. Let $v_i \in \mathbb{R}^k$ be the rows of $U$, $i=1,...,m$. These rows are clustered into clusters $C_1,...,C_k$ according to $k$-means. Finally, we output clusters $A_l=\{i: v_i \in C_l \}, l=1,...,k$ to assign the original $m$ elements, in this case segment distributions, into the corresponding clusters.

\subsection{Theoretical justification of model}
\label{theoretical justification}
In this section, we introduce a mathematical framework: under certain conditions (introduced in Definition \ref{spread defn}), our aforementioned methodology is proven to accurately cluster segments by high or low volatility. Propositions \ref{alt defn prop} and \ref{technicallemma} are intermediate technical results that allow us to establish this claim in Theorem \ref{main thm}.

For the remainder of this section, let $f,g,f_i$ denote continuous probability density functions (pdf's) and let $F,G,F_i$ denote their respective cumulative density functions (cdf's). We recall the necessary properties. Such a function $f$ is continuous $f: \mathbb{R} \to \mathbb{R}$, non-negative everywhere, with $\int_{-\infty}^\infty f(t) dt =1$. The cumulative density function $F$ is defined by $F(x)=\int_{-\infty}^x f(t) dt$.

\begin{definition}
Say a continuous probability density function $f$ is \emph{median-zero} if one of three equivalent conditions hold:
\begin{enumerate}
    \item $\mathlarger{\int_{-\infty}^0 f(x) dx =\frac{1}{2}};$
    \item $\mathlarger{\int_{0}^\infty f(x) dx =\frac{1}{2}};$
    \item $\mathlarger{F(0)=\frac{1}{2}.} $
\end{enumerate}
\end{definition}
In the next definition, we define a partial ordering $\prec$ that describes when one probability distribution is of consistently greater spread than another.

\begin{definition}
\label{spread defn}
Let $f,g$ be two median-zero continuous pdf's. Say \emph{$g$ has greater volatility than $f$ everywhere}, denoted $f \prec g$ if $F(x)<G(x)$ for $x \in (-\infty,0)$ while $F(x)>G(x)$ for $x \in (0, \infty).$
\end{definition}

We make several remarks on this definition. First, this is a partial rather than a total ordering, as two arbitrary cdf's $F,G$ may frequently intersect for $x \in \mathbb{R}.$ Second, we note that this notation is only used in Section \ref{theoretical justification}. Third, this is quite a strong definition - while volatility may sometimes refer to only the second moment, this definition carries information about higher moments as well - broadly speaking if $f \prec g$ then $g$ has all higher moments greater than that of $f$. Finally, the utility of this definition is predicated upon the following proposition and corollary. For an example of two pdf's that fit the scope of both Proposition \ref{alt defn prop} and Corollary \ref{normal corollary}, we include Figure \ref{fig:normaldist}.

\begin{prop}
\label{alt defn prop}
Let $f,g$ be two median-zero continuous pdf's. Suppose there exist $\tau_1,\tau_2>0$ such that $f(x)>g(x)$ for $x \in (-\tau_1,\tau_2)$ but $f(x)<g(x)$ if $x<\tau_1$ or $x>\tau_2$. Then $f \prec g.$
\end{prop}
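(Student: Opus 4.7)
The plan is to define $H(x) = F(x) - G(x)$ and analyze its monotonicity and boundary values on the three regions $(-\infty,-\tau_1)$, $(-\tau_1,\tau_2)$, $(\tau_2,\infty)$ determined by the hypothesis (I read ``$x<\tau_1$'' in the statement as the typo ``$x<-\tau_1$'', consistent with the geometry in Figure \ref{fig:normaldist}). Since $H'(x)=f(x)-g(x)$, the hypothesis translates directly into: $H$ is strictly decreasing on $(-\infty,-\tau_1)$, strictly increasing on $(-\tau_1,\tau_2)$, and strictly decreasing on $(\tau_2,\infty)$.

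Next I would collect the three boundary values. The tails give $\lim_{x\to-\infty}H(x)=0$ and $\lim_{x\to\infty}H(x)=0$, because $F,G$ are cdf's of probability densities. The median-zero hypothesis supplies the crucial middle value $H(0)=F(0)-G(0)=\tfrac12-\tfrac12=0$. These three anchor values, combined with the monotonicity pattern above, are what force the sign conclusions.

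From here the argument is a short case analysis. On $(-\infty,-\tau_1)$, $H$ decreases from $0$, so $H<0$ there; at $x=-\tau_1$ we therefore have $H(-\tau_1)<0$. On $(-\tau_1,\tau_2)$, $H$ is strictly increasing and passes through $H(0)=0$, so $H<0$ on $(-\tau_1,0)$ and $H>0$ on $(0,\tau_2)$. Combining with the previous step yields $H(x)<0$ throughout $(-\infty,0)$, i.e.\ $F(x)<G(x)$. On $(\tau_2,\infty)$, $H$ decreases to $0$ at $+\infty$, hence $H>0$, and together with $H>0$ on $(0,\tau_2)$ this gives $F(x)>G(x)$ throughout $(0,\infty)$. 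By Definition \ref{spread defn} this is exactly $f\prec g$.

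I do not expect any real obstacle: the proof is essentially a one-variable monotonicity/pigeonhole argument. The one subtlety worth stating carefully is why the median-zero assumption is indispensable — without $H(0)=0$ pinned down, the increasing behaviour on $(-\tau_1,\tau_2)$ would let $H$ change sign at some unknown interior point rather than exactly at $0$, and the partial-order conclusion of Definition \ref{spread defn} would fail. Everything else (continuity of $F,G$, the tail limits, and reading off the sign of $H'$ from the hypothesis) is routine.
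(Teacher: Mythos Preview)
Your proposal is correct and follows essentially the same route as the paper: define $H=F-G$, read off the sign of $H'=f-g$ from the hypothesis to get monotonicity on the three pieces, and anchor with $H(0)=0$ from the median-zero assumption together with the tail limits to force $H<0$ on $(-\infty,0)$ and $H>0$ on $(0,\infty)$. The only cosmetic difference is that on the outer tails the paper argues by direct comparison of the integrals $\int_{-\infty}^x f<\int_{-\infty}^x g$ rather than phrasing it as ``$H$ decreases from the limit $0$,'' but these are the same observation, and your reading of ``$x<\tau_1$'' as the typo ``$x<-\tau_1$'' is exactly right.
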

This proposition gives a means to establish that $f \prec g$ by quick inspection.

\begin{figure}
    \centering
\includegraphics[width=0.9\textwidth]{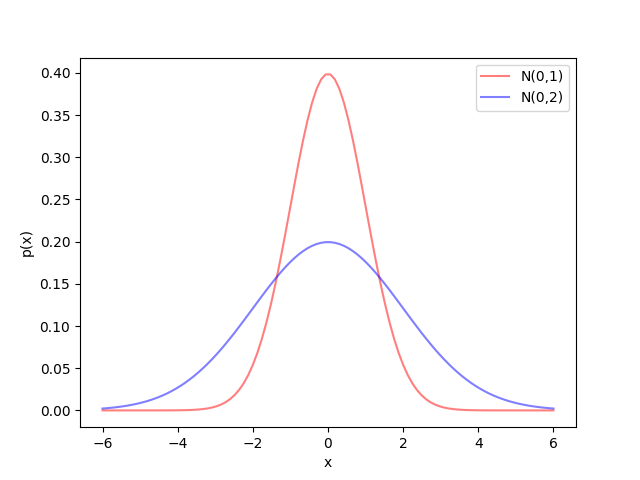}
    \caption{Normal distributions satisfying conditions of Proposition \ref{alt defn prop} and Corollary \ref{normal corollary}. These results prove that $N(0,1)\prec N(0,2)$.}
    \label{fig:normaldist}
\end{figure}

\begin{proof}
First, the continuity of $f$ and $g$ implies that $f(\tau_1)=g(\tau_1)$ and $f(\tau_2)=g(\tau_2)$. Next, let $x \leq \tau_1$. By integrating $f$ and $g$ over the interval $t \in (-\infty,x)$, it is immediate that $F(x)<G(x)$. Now, consider the interval $[\tau_1,0]$ and the function $h=F-G$ on this interval. We have established that $h(\tau_1)<0$ while $h(0)=0$ by the median-zero property. By the fundamental theorem of calculus, $h$ is differentiable on the interval with $h'(t)=f(t)-g(t)>0$ for $t \in (\tau_1,0).$ Thus $h$ is increasing on the interval $[\tau_1,0]$. Since $h(0)=0$, this implies $h(t)<0$ for $t \in (\tau_1,0)$, that is $F(t)<G(t)$ on the interval $t \in (\tau_1,0)$.

Together with the earlier fact that $F(x)<G(x)$ for $x \leq \tau_1$, we deduce that $F(x)<G(x)$ for all $x<0$. By an identical argument, we deduce $F(x)>G(x)$ for all $x>0$. That is, $f \prec g$, which concludes the proof.
\end{proof}

\begin{corollary}
\label{normal corollary}
Let $0<\sigma_1 < \sigma_2$ and let $f_1,f_2$ be the pdf's associated to normal distributions $N(0,\sigma_1), N(0,\sigma_2)$ respectively. Then $f_1 \prec f_2$.
\end{corollary}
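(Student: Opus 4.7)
The plan is to reduce the corollary to Proposition \ref{alt defn prop} by identifying the unique positive crossing point of the two normal densities. The symmetry of the normal distribution about $0$ will make this clean: both $f_1$ and $f_2$ are manifestly median-zero, and the two crossing points will be a symmetric pair $\pm\tau$, so I can set $\tau_1=\tau_2=\tau$ in the hypothesis of Proposition \ref{alt defn prop}.

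Concretely, I would first write down $f_i(x)=\frac{1}{\sigma_i\sqrt{2\pi}}\exp(-x^2/(2\sigma_i^2))$ for $i=1,2$ and check $\int_{-\infty}^0 f_i(x)\,dx=\tfrac12$ from the even symmetry of the Gaussian. Next I would solve the equation $f_1(x)=f_2(x)$: taking logarithms yields
\begin{equation*}
\log\!\bigl(\sigma_2/\sigma_1\bigr)=\frac{x^2}{2}\left(\frac{1}{\sigma_1^2}-\frac{1}{\sigma_2^2}\right),
\end{equation*}
and since $0<\sigma_1<\sigma_2$ both sides are positive, producing a unique $\tau>0$ with $f_1(\pm\tau)=f_2(\pm\tau)$.

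Then I would verify the sign pattern required by Proposition \ref{alt defn prop}. At $x=0$ one has $f_1(0)=1/(\sigma_1\sqrt{2\pi})>1/(\sigma_2\sqrt{2\pi})=f_2(0)$, so $f_1>f_2$ on $(-\tau,\tau)$ by the intermediate value theorem together with the fact that the continuous function $f_1-f_2$ can change sign only at $\pm\tau$. Conversely, for $|x|$ large the ratio $f_1(x)/f_2(x)=(\sigma_2/\sigma_1)\exp\!\bigl(\tfrac{x^2}{2}(\tfrac{1}{\sigma_2^2}-\tfrac{1}{\sigma_1^2})\bigr)$ tends to $0$ because the exponent is negative and blows up, hence $f_1<f_2$ on $(-\infty,-\tau)\cup(\tau,\infty)$. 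This is exactly the hypothesis of Proposition \ref{alt defn prop} with $\tau_1=\tau_2=\tau$, so invoking it gives $f_1\prec f_2$.

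There is no real obstacle here; the only delicate point is justifying that $\pm\tau$ are the \emph{only} crossings so that the sign of $f_1-f_2$ is constant on each of the three subintervals. This follows because the log-ratio $\log(f_1/f_2)$ is a quadratic in $x$ with exactly two real roots, which rules out any further sign changes. Everything else is a direct computation from the Gaussian formula.
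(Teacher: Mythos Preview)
Your proof is correct and follows essentially the same approach as the paper: both arguments write down the Gaussian densities, solve $f_1(x)=f_2(x)$ by taking logarithms to locate the symmetric crossing points $\pm\tau$, verify that $f_1>f_2$ on $(-\tau,\tau)$ and $f_1<f_2$ outside, and then invoke Proposition~\ref{alt defn prop}. You are slightly more explicit about justifying the sign pattern (via $f_1(0)>f_2(0)$, the tail ratio, and the quadratic log-ratio), whereas the paper simply reduces $f_1(x)>f_2(x)$ directly to the inequality $x^2<\frac{2\sigma_1^2\sigma_2^2}{\sigma_2^2-\sigma_1^2}\log(\sigma_2/\sigma_1)$ in one step, but the substance is identical.
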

\begin{proof}
We have $f_i(x) =\frac{1}{\sqrt{2\pi}\sigma_i} \exp(-x^2/2\sigma_i^2)$ for $i=1,2$. By simplifying, $f_1(x)>f_2(x)$ if and only if
\begin{align}
\label{eq:f1f2}
    x^2 < \frac{2\sigma_1^2 \sigma_2^2}{\sigma_2^2 - \sigma_1^2} \log \left( \frac{\sigma_2}{\sigma_1}\right).
\end{align}
Let $\tau$ be the square root of the right hand side of (\ref{eq:f1f2}). Then $f_1(x)>f_2(x)$ precisely for $x \in (-\tau,\tau)$, which establishes $f_1 \prec f_2$.

\end{proof}

The following lemma is a technical result for evaluating Wasserstein distances between pdf's that can be ordered with $\prec$.
\begin{prop}
\label{technicallemma}
Let $f<g$. Then the Wasserstein distance $W(f,g)$ can be computed as 
\begin{align}
    W(f,g)= \int_{-\infty}^0 (G-F)dx + \int_{0}^\infty (F-G) dx.
\end{align}
\end{prop}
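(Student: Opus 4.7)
The plan is to apply the one-dimensional cdf formula for the Wasserstein distance established earlier in Section \ref{Wasserstein} and then use the sign structure imposed by the ordering $f \prec g$ to strip off the absolute value and split the integral at $0$.

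First I would recall that with $p=1$, the formula from \citet{DelBarrio} specializes to
\begin{equation*}
    W(f,g) = W_1(\mu,\nu) = \int_{\mathbb{R}} |F(x) - G(x)|\, dx,
\end{equation*}
where $\mu,\nu$ are the probability measures with cdf's $F,G$ respectively. This is the one key tool; everything else is bookkeeping.

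Next I would invoke the hypothesis $f \prec g$ together with Definition \ref{spread defn} to determine the sign of $F-G$ on each side of the median. By definition, $F(x) < G(x)$ for $x \in (-\infty, 0)$ and $F(x) > G(x)$ for $x \in (0,\infty)$, while the median-zero property forces $F(0) = G(0) = \tfrac{1}{2}$ so the integrand vanishes at the single point $x=0$ (and is thus of measure zero, contributing nothing). Splitting the integral at $0$ gives
\begin{equation*}
    W(f,g) = \int_{-\infty}^{0} |F-G|\, dx + \int_{0}^{\infty} |F-G|\, dx,
\end{equation*}
and removing the absolute values according to the sign of $F-G$ on each half-line yields precisely the claimed expression
\begin{equation*}
    W(f,g) = \int_{-\infty}^0 (G - F)\, dx + \int_{0}^\infty (F - G)\, dx.
\end{equation*}

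The main obstacle is really just the conceptual one of verifying that the integrals converge, which is guaranteed by the assumption (implicit in writing down $W(f,g)$) that the underlying distributions have finite first moment, so that $\int |F-G|\,dx<\infty$. No additional estimates or constructions are required; the statement is essentially a direct corollary of the earlier cdf-based formula combined with the structural content of the ordering $\prec$.
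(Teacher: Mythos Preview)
Your proposal is correct and follows essentially the same route as the paper: invoke the cdf formula $W(f,g)=\int_{\mathbb{R}}|F-G|\,dx$ from \citet{DelBarrio}, use the sign information in Definition~\ref{spread defn} to drop the absolute value on each half-line, and split at $0$. Your added remarks about the measure-zero contribution at $x=0$ and the finiteness of the integral are reasonable but not present in the paper's version.
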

\begin{proof}
In general, $W(f,g)$ may be computed \citep{DelBarrio} as 
\begin{align*}
    \int_{\mathbb{R}} |F - G| dx.
\end{align*}
In the case where $f \prec g$, $F-G$ is non-negative on $[0,\infty)$ and negative on $(\infty,0)$. Thus this integral can be expressed as a sum
\begin{align*}
 \int_{-\infty}^0 |F - G| dx + \int_{0}^\infty |F - G| dx = \int_{-\infty}^0 (G-F)dx + \int_{0}^\infty (F-G) dx, \end{align*}
 as required.
\end{proof}

\begin{theorem}
\label{main thm}
Suppose $f_1,\dots,f_n$ is a collection of median-zero pdf's that can be totally ordered by $\prec$. Then our method described in Section \ref{methodology} partitions the segments into intervals under $\prec$. That is, if $f_1 \prec \dots \prec f_n$ then the methodology necessarily segments into clusters $C_1=\{f_1,\dots,f_{n_1}\}, C_2=\{f_{n_1+1},\dots,f_{n_2}\},\dots,C_k=\{f_{n_{k-1}+1},\dots,f_{n_k}\}$. Moreover, our methodology is also resistant to deformations: if $f_1,\dots,f_n$ are a sufficiently small deformation away from totally ordered median-zero pdf's, then the result still holds.
\end{theorem}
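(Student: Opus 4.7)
The plan is to reduce the theorem to a statement about spectral clustering of points on a line, exploiting the fact that the Wasserstein distance is additive along chains under $\prec$. First, I would establish an additivity identity: if $f_i \prec f_j \prec f_k$, then $W(f_i, f_k) = W(f_i, f_j) + W(f_j, f_k)$. This follows by direct application of Proposition \ref{technicallemma}: under the hypotheses, $F_i < F_j < F_k$ on $(-\infty,0)$ and $F_i > F_j > F_k$ on $(0,\infty)$, so splitting $F_k - F_i = (F_k - F_j) + (F_j - F_i)$ termwise and integrating gives the claim.

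Given the chain $f_1 \prec \cdots \prec f_n$, I would then embed the distributions into $\mathbb{R}$ by setting $t_1 = 0$ and $t_i = \sum_{j=1}^{i-1} W(f_j, f_{j+1})$ for $i \geq 2$. Iterating the additivity identity yields $t_i - t_j = W(f_j, f_i)$ for $j < i$, so the Wasserstein distance matrix $D$ satisfies $D_{ij} = |t_i - t_j|$. Consequently, the input to the spectral clustering algorithm is indistinguishable from that produced by clustering the real points $t_1 < t_2 < \cdots < t_n$ on a line.

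At this stage I would invoke a structural fact about spectral clustering of 1D data: when $D$ arises from an ordered set of real points, the self-tuning affinity $A_{ij} = \exp(-D_{ij}^2/(\sigma_i \sigma_j))$ is a non-increasing function of $|t_i - t_j|$ along each row, and the eigenvectors of $L_{\text{sym}}$ associated with small eigenvalues are approximately piecewise constant in the index $i$. A $k$-means pass on the rows $v_i$ of the eigenvector matrix $U$ then yields clusters $C_1, \ldots, C_k$ that are contiguous intervals in the 1D order, which is precisely the first half of the theorem.

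For the deformation-resistance claim, I would argue by continuity: Wasserstein distances depend continuously on the underlying cdf's, so small perturbations of $f_1, \ldots, f_n$ induce only small perturbations of $D$, $A$, and $L_{\text{sym}}$. By standard perturbation theory, the low-eigenvector subspace of $L_{\text{sym}}$ moves continuously provided the relevant eigenvalues remain separated, and the subsequent $k$-means clustering is stable as long as the unperturbed partition is non-degenerate (no row equidistant from two cluster centers). The main obstacle will be the spectral clustering step itself — rigorously pinning down the 1D monotonicity of the Laplacian eigenvectors in the presence of the self-tuning normalization $\sigma_i$, which spoils exact Toeplitz symmetry. A careful argument would need to treat $L_{\text{sym}}$ as a weighted discrete second-derivative operator on the chain and show that its low-frequency eigenvectors inherit the monotone / sign-alternating structure of the continuous Laplacian's trigonometric eigenfunctions, thereby forcing the $k$-means partition to respect the order.
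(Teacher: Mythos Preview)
Your approach is essentially identical to the paper's: reduce $D$ to the distance matrix of real points on a line (your additivity-based embedding $t_i$ is equivalent to the paper's direct formula $a_i = W(f_1, f_i)$), assert that spectral clustering on one-dimensional data yields contiguous intervals, and argue deformation-resistance by continuity of the Wasserstein metric and of the clustering pipeline. The only notable difference is that you correctly flag the spectral clustering step as the main obstacle requiring justification, whereas the paper simply asserts it without proof.
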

\begin{proof}
First we assume the collection can be totally ordered under $\prec$. By relabelling, without loss of generality $f_1 \prec \dots f_n$. Then, if $i<j$,

\begin{align}
\label{eq3} W(f_i,f_j)&=    \int_{-\infty}^0 (F_j-F_i)dx + \int_{0}^\infty (F_i-F_j) dx\\
\label{eq4} &=\left(\int_{-\infty}^0 (F_j-F_1)dx + \int_{0}^\infty (F_1-F_j) dx \right) \\
\label{eq5} &-\left(\int_{-\infty}^0 (F_i-F_1)dx + \int_{0}^\infty (F_1-F_i) dx \right).
\end{align}
Thus, let $a_i=\mathlarger{\int_{-\infty}^0 (F_j-F_1)dx + \int_{0}^\infty (F_1-F_j)dx}$. Then $0=a_1<a_2<\dots < a_n$ and  $W(f_i,f_j)=a_j - a_i$ by (\ref{eq3}), (\ref{eq4}) and (\ref{eq5}) above. That is, the distance matrix $D$ between the segments $f_1,\dots,f_n$ is identical to a distance matrix between real numbers $a_1<\dots<a_n$. Spectral clustering applied to real numbers partitions them into intervals, that is $C_1=\{a_1,\dots,a_{n_1}\}, C_2=\{a_{n_1+1},\dots,a_{n_2}\},\dots,C_k=\{a_{n_{k-1}+1},\dots,a_{n_k}\}$ for some integers $1 \leq n_1<\dots<n_k=n$. Since $D_{ij}=a_j - a_i$ the results of spectral clustering are identical to applying it on the corresponding real numbers. Hence the segments are partitioned into corresponding clusters $C_1=\{f_1,\dots,f_{n_1}\}, C_2=\{f_{n_1+1},\dots,f_{n_2}\},\dots,C_k=\{f_{n_{k-1}+1},\dots,f_{n_k}\}$ as required.

Finally, if the initial collection of pdf's were not precisely median-zero and totally ordered under $\prec$ but a sufficiently small deformation away from a median-zero totally ordered collection, then the result still holds. By the continuity of the Wasserstein distance, the distance matrix $D$ would be a small deformation away from a distance matrix between real numbers. By the continuity of the procedures within spectral clustering, once again the $f_i$ would be partitioned into clusters of increasing volatility.
\end{proof}

\section{Results}
\label{results}
\subsection{Synthetic data experiments}

In this section, we validate our method on 200 synthetic time series. In two distinct experiments, we randomly generate 100 times series each with artificially pronounced breaks in volatility by concatenating different segments. Each segment is randomly drawn from various data generating processes and randomly chosen between 200 and 300 in length, together with added Gaussian random noise $\delta,\epsilon$ to ensure none of the data generating processes are identical. Thus, we create 200 unique time series where a ground truth number of clusters and the segment memberships are known - to these we apply our methodology to validate it. In a small number of instances, our methodology does not correctly identify the number of segments at the change point detection phase - we term this a \emph{mismatch}. Excluding these, when the change point detection step correctly identifies the number of segments, we evaluate the quality of the clustering using the Fawlkes-Mallows index \citep{Fowlkes1983,Halkidi2001}:
$$FMI = \sqrt{ \frac {TP}{TP+FP} \cdot \frac{TP}{TP+FN}},$$
where $TP, FP, FN$ are the number of true positives, false positives and false negatives, respectively. The score is bounded between 0 and 1, where 1 indicates a perfect cluster assignment. This FMI can only be applied when the change point detection step correctly identifies the number of segments, which we call \emph{correctly matched}.

In the first experiment, we draw from five different normal distributions, as follows:
\begin{gather*}
X_1 \sim \mathcal{N}(0+\delta, (0.25+\epsilon)^2) \\
X_2 \sim \mathcal{N}(0+\delta, (0.5+\epsilon)^2) \\
X_3 \sim \mathcal{N}(0+\delta, (1+\epsilon)^2) \\
X_4 \sim \mathcal{N}(0+\delta, (2+\epsilon)^2) \\
X_5 \sim \mathcal{N}(0+\delta, (4+\epsilon)^2)
\end{gather*}
They must be approximately mean-centred to mimic the properties of the log returns time series and for the Mood test to detect change points in the variance, as detailed in Appendix \ref{cpa appendix}.

In this first experiment, there were 4 out of 100 mismatches. In all these instances, the algorithm was off by just one in detecting the correct number of segments. Among the 96 remaining time series, we are able to apply the two different methods to select the number of clusters and apply spectral clustering, as described in Section \ref{spectral clustering}. We refer to the two methods as the eigengap method to generate $k^e$ and the gradient descent method to generate $k^{ZP}$, as defined in \ref{spectral clustering}. To validate our methodology, we record both the mean FMI score for these two methods as well as smoothed histograms over the 96 experiments.

\begin{figure}[htbp]
\centering
\subfigure[Synthetic time series change points and detection times\label{fig:syn_time_cp}]{%
\resizebox*{9.7cm}{!}{\includegraphics{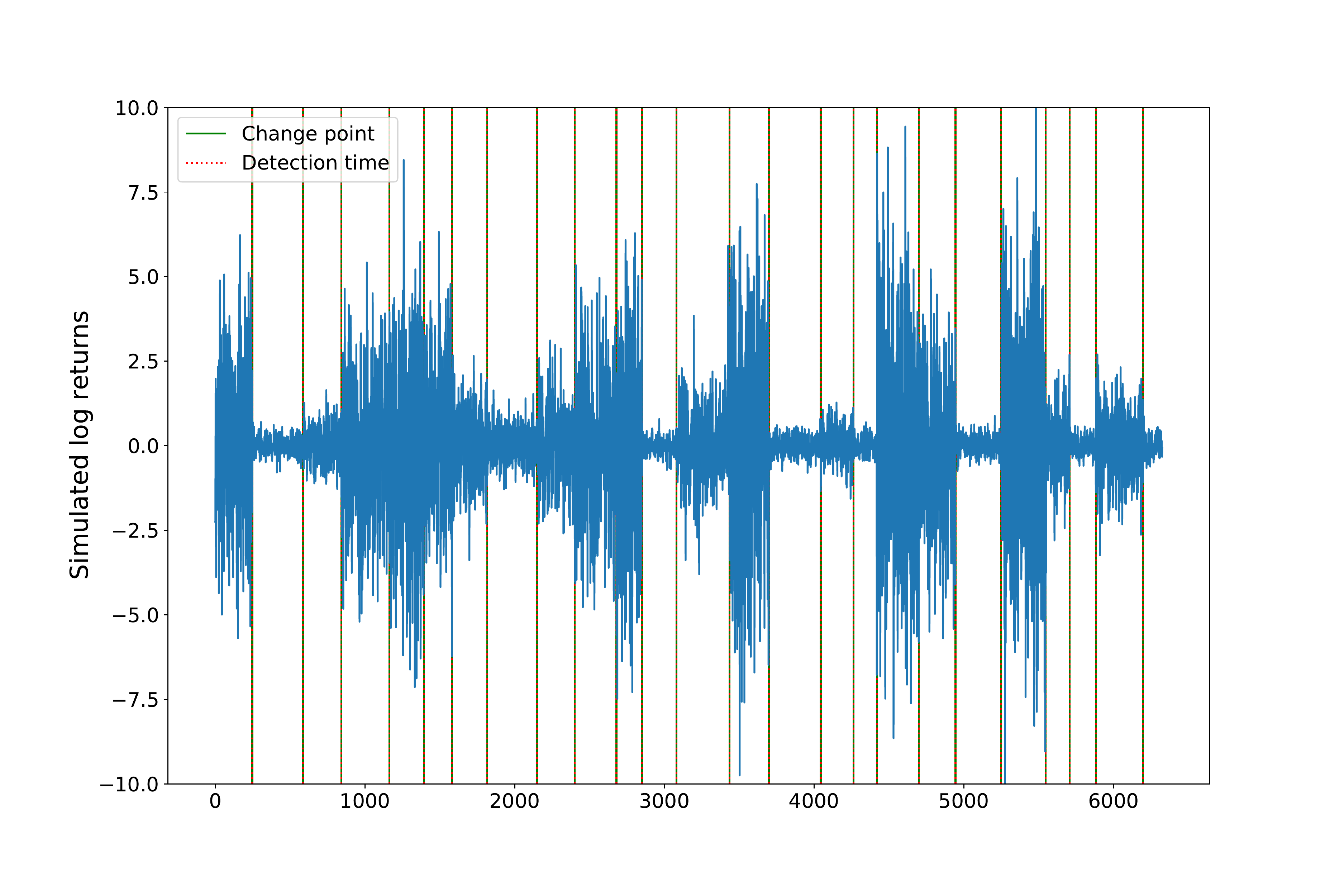}}}
\hspace{5pt}
\subfigure[Kernel density estimation plots of synthetic distributions, forming five clusters\label{fig:syn_cl_d}]{%
\resizebox*{9.7cm}{!}{\includegraphics{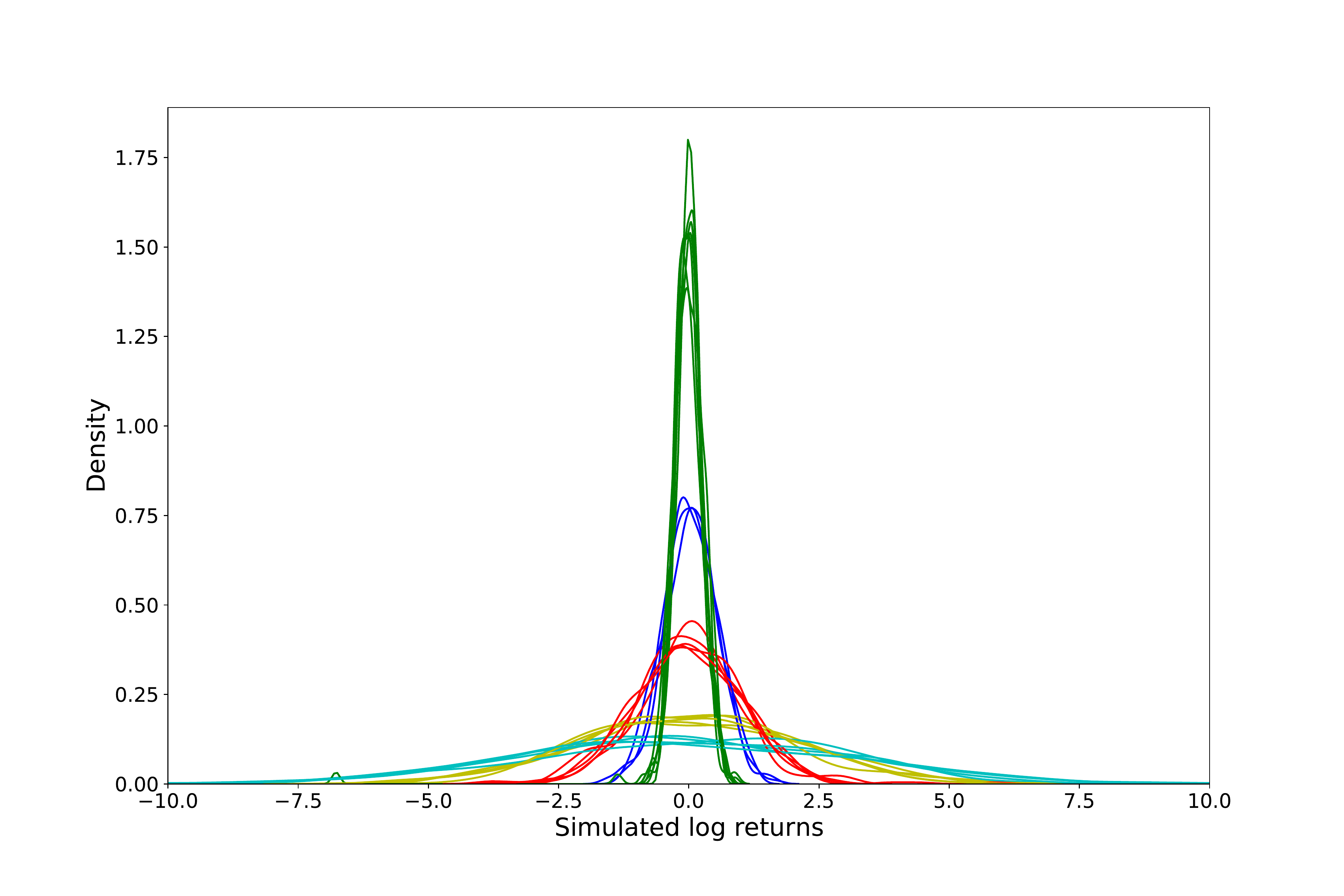}}}
\subfigure[Five determined volatility regimes\label{fig:syn_cl_t}]{%
\resizebox*{9.7cm}{!}{\includegraphics{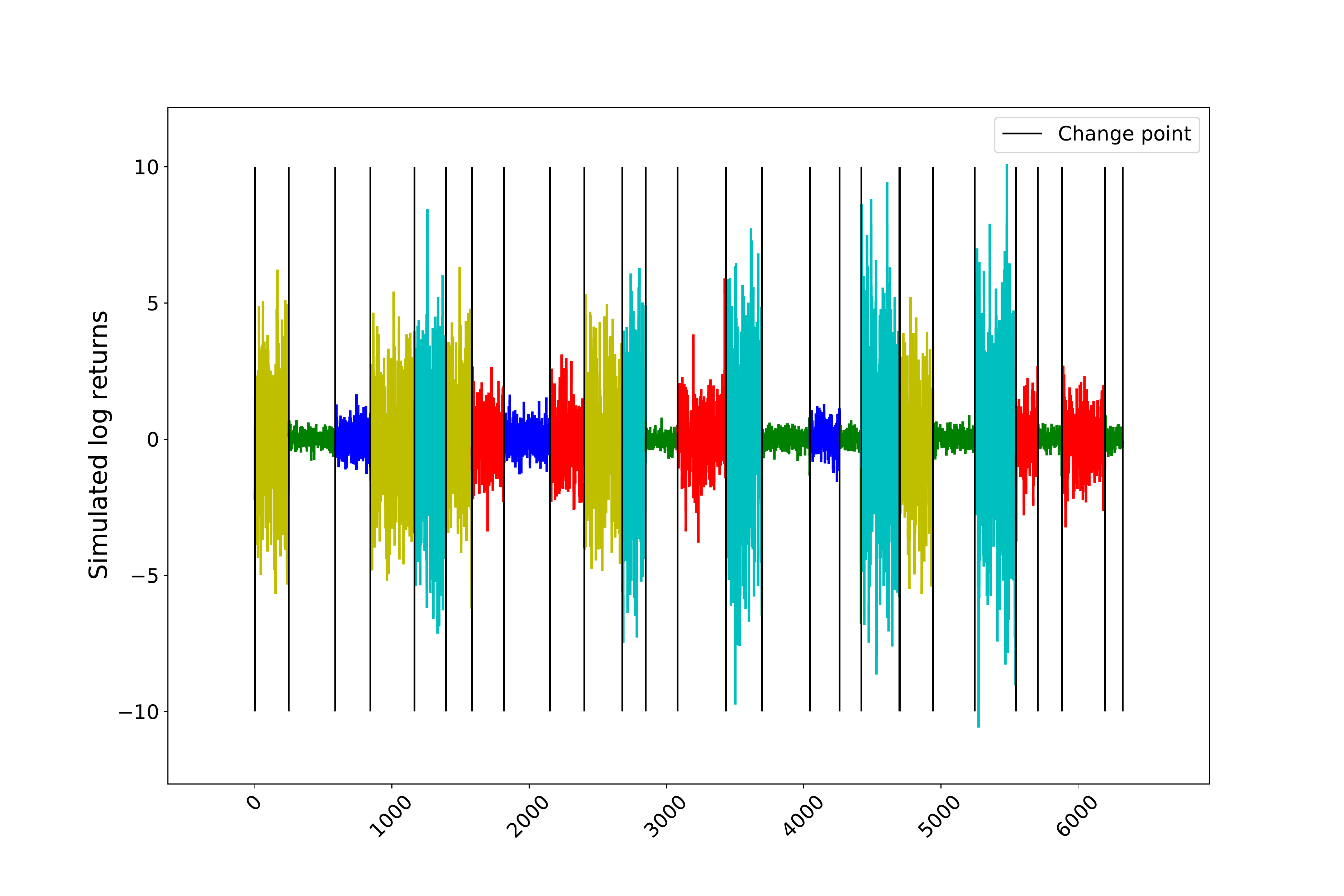}}}
\hspace{5pt}
\caption{Synthetic data experiment for normal distributions (five regimes)} 
\label{fig:syn}
\end{figure}

In Figure \ref{fig:syn}, we present one example of such a time series. Figure \ref{fig:syn_time_cp} displays the change point partitions; the detection times, as described in Appendix \ref{cpa appendix}, are not visible. Figure \ref{fig:syn_cl_d} displays the kernel density estimations of the distributions, coloured according to their membership in five detected clusters. Figure \ref{fig:syn_cl_t} shows the final clustering of the segments of the synthetic time series. This whole procedure correctly identifies the change in variance, as well as the existence of five regimes (clusters) of volatility.

Across all 96 correctly matched iterations of this first experiment, the mean FMI score for the eigengap method ($k^e$) was 0.93, while the mean FMI score for the gradient descent method ($k^{ZP}$) was 0.86. The smoothed histogram in Figure \ref{fig:normal_fmi} shows that the eigengap method has systematically higher FMI scores.

For the second experiment, we repeated the same procedure, however we drew from five Laplace distributions:
\begin{gather*}
X_1 \sim \mathcal{F_L}(0+\delta, 0.25+\epsilon) \\ 
X_2 \sim \mathcal{F_L}(0+\delta, 0.5+\epsilon) \\
X_3 \sim \mathcal{F_L}(0+\delta, 1+\epsilon) \\
X_4 \sim \mathcal{F_L}(0+\delta, 2+\epsilon) \\
X_5 \sim \mathcal{F_L}(0+\delta,4+\epsilon)
\end{gather*}
In this second experiment, there were 11 out of 100 mismatches. Once again, the algorithm was off by just one in detecting the correct number of segments in all instances. Across all 89 correctly matched iterations of this second experiment, the mean FMI score for the eigengap method ($k^e$) was 0.94, while the mean FMI score for the gradient descent method ($k^{ZP}$) was 0.85. Again, the smoothed histogram in Figure \ref{fig:lap_fmi} shows that the eigengap method has systematically higher FMI scores than the alternative.

\begin{figure}[htbp]
\centering
\subfigure[Distribution of FMI scores over 96 correctly matched instances \label{fig:normal_fmi}]{%
\resizebox*{14cm}{!}{\includegraphics{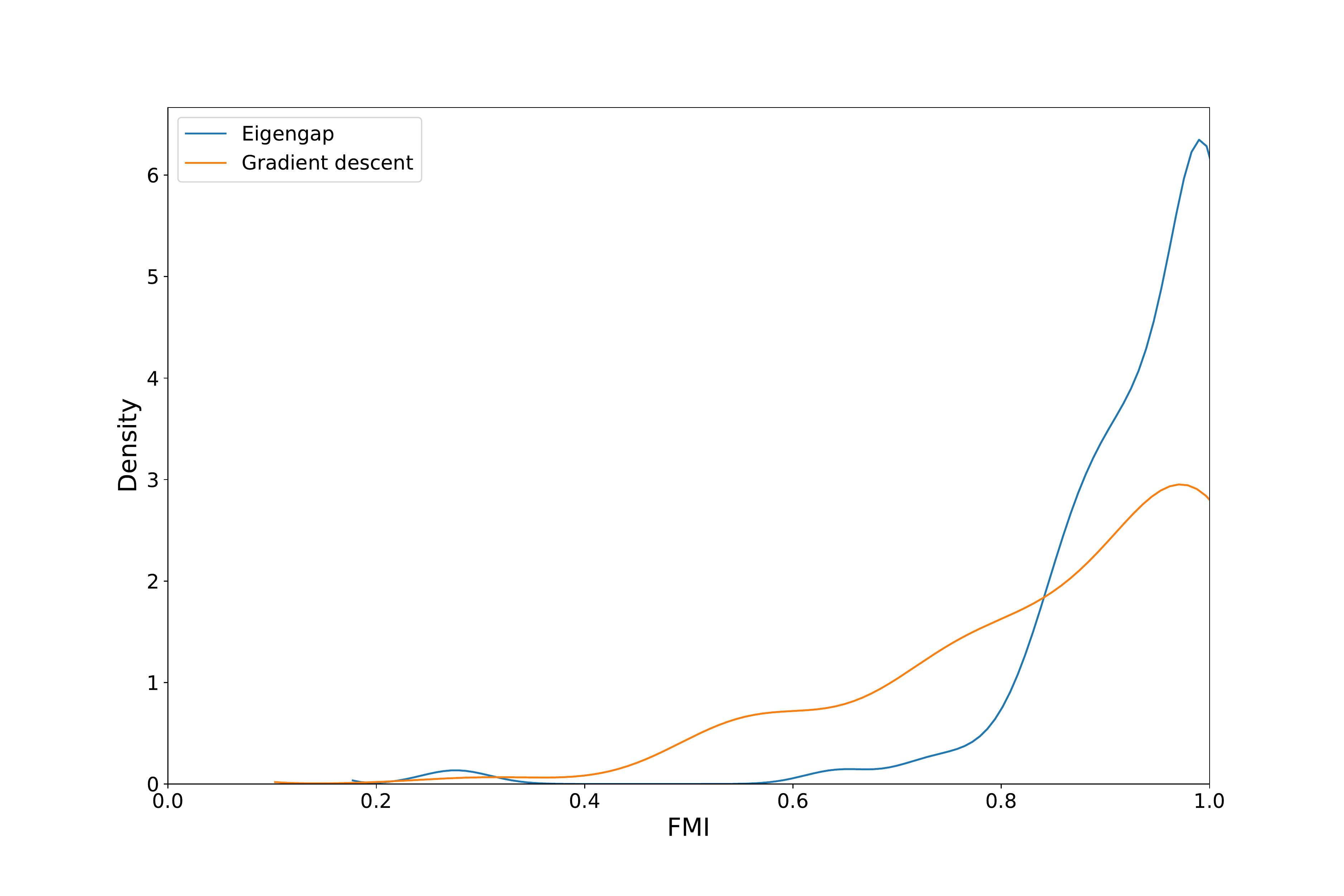}}}\hspace{5pt}

\subfigure[Distribution of FMI scores over 89 correctly matched instances\label{fig:lap_fmi}
]{%
\resizebox*{14cm}{!}{\includegraphics{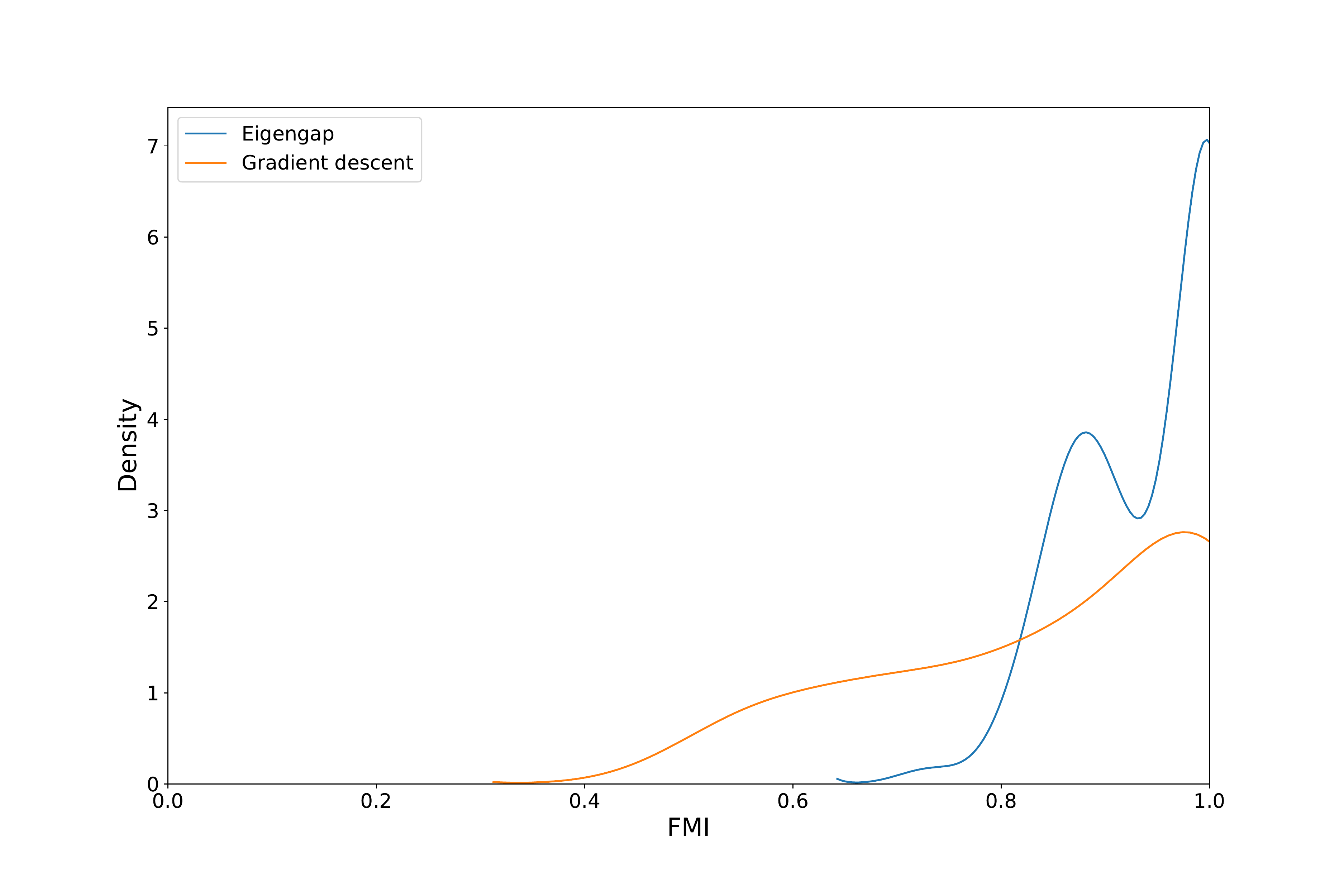}}}
\caption{Smoothed histograms for FMI scores from the eigengap and gradient descent method for (a) synthetic Gaussian and (b) synthetic Laplace experiments}
\label{fig:fmi}
\end{figure}

Manually examining the mismatches shows that they tend to occur when the one segment is sandwiched between two segments drawn from the next most similar distribution, for example when a $\mathcal{N}(0,1)$ segment is sandwiched between two $\mathcal{N}(0,2)$ segments. When this happens, the algorithm tends to split the segments into four parts instead of three, leading to the overestimation. It is also worth noting that we repeated both sets of experiments using the Bartlett test instead of the Mood test to help reduce mismatches. While both tests performed similarly in the first experiment, the Bartlett test had a much higher mismatch rate in the second experiment, leading to much higher estimations of the number of clusters. This is to be expected as the Barlett test is a parametric test for changes in normal distributions. The better performance of the Mood test in both the normal and Laplace experiments indicates that it is better suited to real-world log returns data, which are not necessarily normally distributed. Having observed few mismatches and better FMI scores with the eigengap method, we proceed by exclusively applying the eigengap method to real data in subsequent sections.

\subsection{SPY}
\label{SP500}

In this section, we apply our method to SPY, an ETF of the S\&P 500, and analyze our volatility clustering results. We study adjusted closing price data from Yahoo! Finance, \url{https://finance.yahoo.com}, from 1 January 2008 to 31 December 2020, and compute the log returns before applying our methodology.

Our algorithm finds five volatility clusters over the 13-year period studied, seen in Figure \ref{fig:spy}. The cyan cluster is associated with extreme market behaviours such as the worst part of the global financial crisis (GFC). The next most volatile cluster is displayed in yellow, and the majority of the GFC, the August 2011 crash and the entirety of 2020 are grouped in this period. The red cluster contains the start of the GFC, the 2010 flash crash and the 2015-16 sell off and the US/China Trade War. Finally the blue and green clusters display more stable economic periods. Our trading strategy is predicated upon the idea that we associate the most recent window with a period in the past that is most similar. Details are described in Section \ref{trading strat}. We observe that a change point is detected between the 17th and 18th segment of Figure \ref{fig:spy_cl_t}, and yet there was no regime change in volatility at this time. This can occur when the distributions are different, but not different enough to warrant an entire regime change.

\begin{figure}[htbp]
\centering
\subfigure[SPY change points and detection times \label{fig:spy_time_cp}]{%
\resizebox*{9.7cm}{!}{\includegraphics{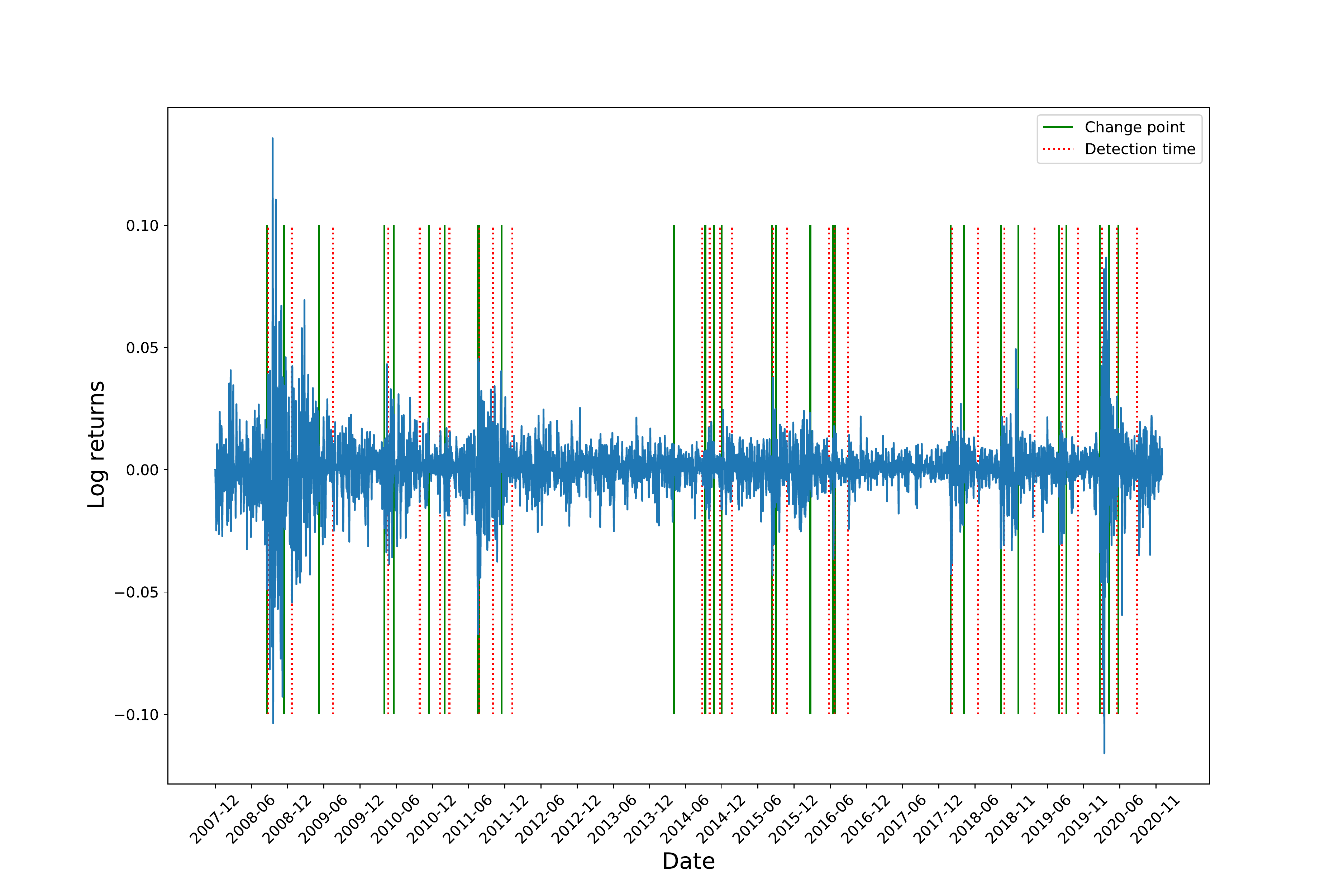}}}\hspace{5pt}
\subfigure[Kernel density estimation plots of SPY distributions, forming five clusters \label{fig:spy_cl_d}]{%
\resizebox*{9.7cm}{!}{\includegraphics{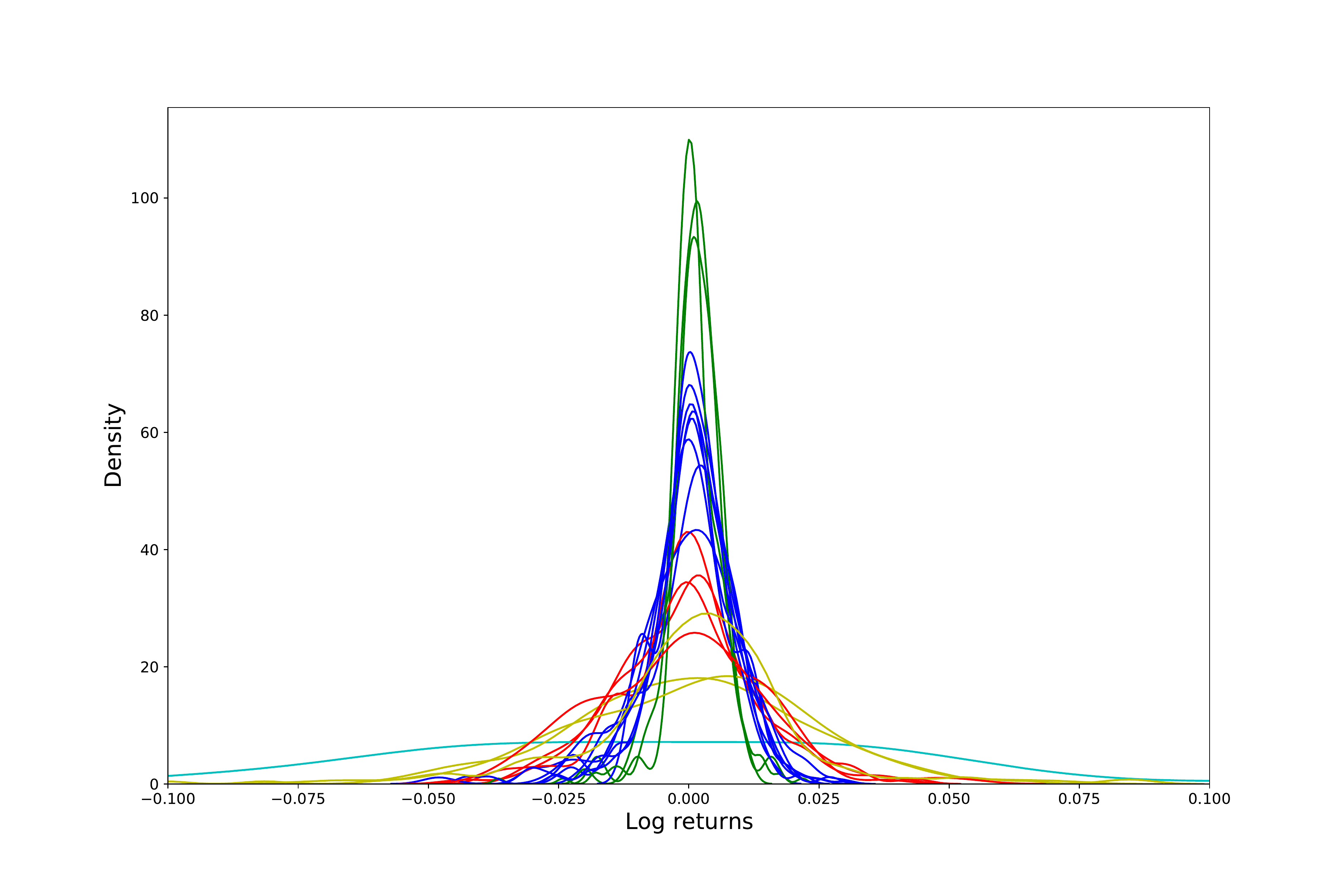}}}
\subfigure[Five determined volatility regimes for SPY \label{fig:spy_cl_t}]{%
\resizebox*{9.7cm}{!}{\includegraphics{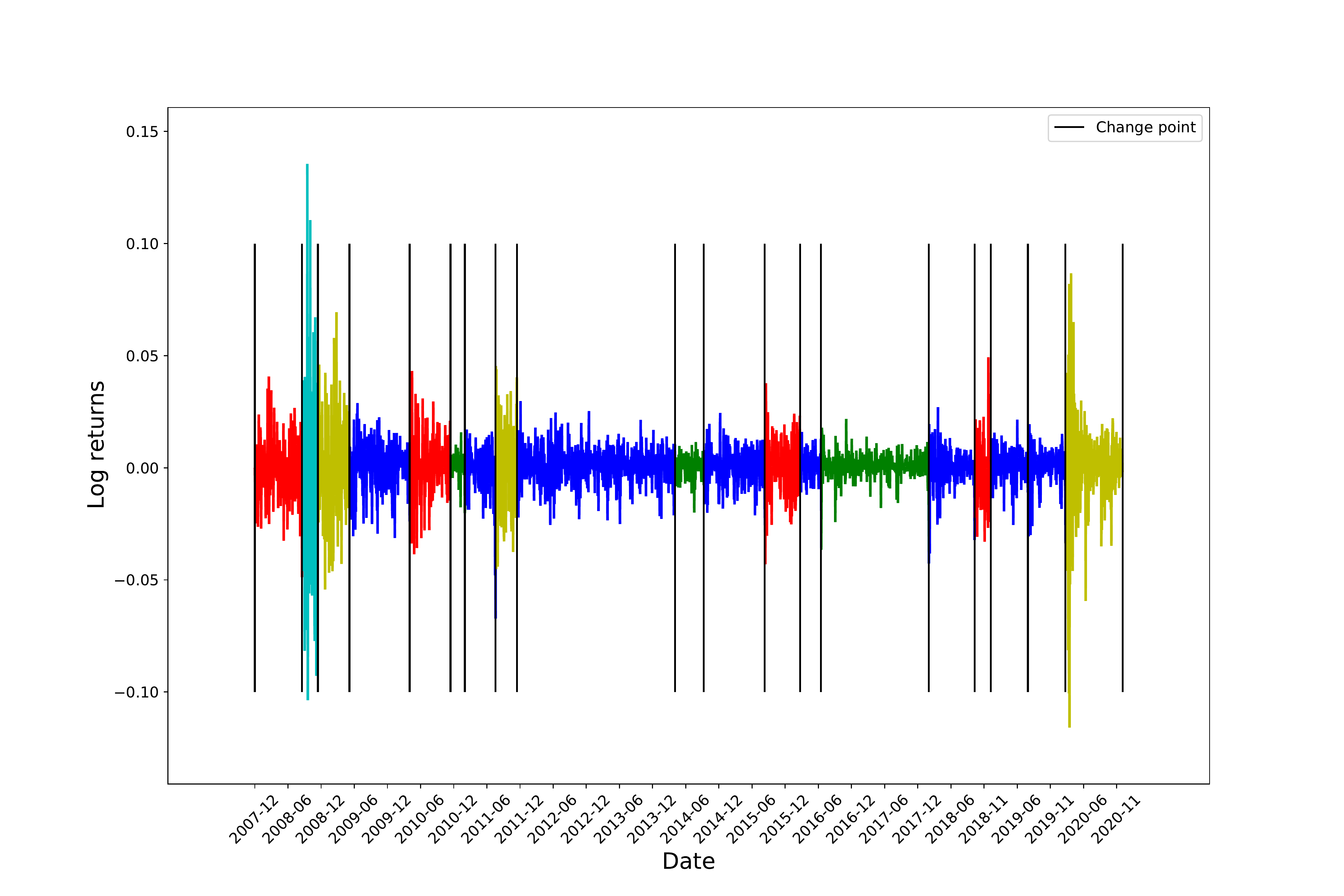}}}\hspace{5pt}
\caption{Volatility clustering results for SPY} \label{fig:spy}
\end{figure}

\subsection{Empirical results on various asset classes}

In this section, we apply our methodology to a collection of asset classes including equity indices, individual stocks, ETFs and currency pairs. For each asset, we report the learned number of segments and clusters. This information is used to design the trading strategy. We display these results in Tables \ref{table:index_results}, \ref{table:big5_results}, \ref{table:etfs} and \ref{table:currs}, respectively. 

Table \ref{table:index_results} displays results for the S\&P 500, Dow Jones, Nikkei, FTSE, ASX 200 indices. Table \ref{table:big5_results} displays results for large firms including: MSFT (Microsoft), APPL (Apple), AMZN (Amazon), GOOG (Alphabet) and BRK-A (Berkshire Hathaway Class A). Table \ref{table:etfs} displays results for the collection of ETFs studied, including: RYT (Invesco S\&P 500 Equal Weight Technology ETF), GLD (SPDR Gold Shares), XLF (Financial Select Sector SPDR Fund), IJS (iShares SP Small-Cap 600), MSCI (World Index Fund). Finally, Table \ref{table:currs} displays results for several currency pairs. In Appendix \ref{app4}, we provide figures displaying various time series' partitioned volatility regimes, and the respective clusters in which they are associated.

\begin{table}[htpb]
\tbl{Volatility cluster structure for major indices}
{ \begin{tabular}{c c c c} 
 \toprule
 Ticker & No. Segments & No. Clusters & Cluster Sizes \\ [0.5ex] 
 \midrule
 S\&P 500 &19& 5 & 3, 8, 4, 3, 1 \\ 
 Dow Jones &19 &  5 & 5, 6, 3, 3, 2 \\ 
 Nikkei 225   & 17 & 2 & 6, 11  \\ 
 MSCI & 15 & 4 & 4, 5, 4, 2 \\ 
 \bottomrule
\end{tabular}}
\label{table:index_results}
\end{table}

\begin{table}[htpb]
\tbl{Volatility cluster structure for large firms}
{\begin{tabular}{l c c c c} 
\toprule
Ticker  & No. Segments & No. Clusters & Cluster Sizes \\
[0.5ex] 
\midrule
MSFT   & 15 & 4 & 6, 7, 1, 1 \\ 
AAPL &  15 & 3 & 4, 5, 6 \\ 
AMZN  & 14 & 2  & 7, 7  \\
GOOG  & 13 & 2 & 6, 7  \\ 
BRK-A & 15 & 4 & 5, 4, 3, 3 \\ 
\bottomrule
\end{tabular}}
\label{table:big5_results}
\end{table}

\begin{table}[htpb]
\tbl{Volatility cluster structure for popular ETFs}
 {\begin{tabular}{l c c c c} 
 \toprule
 Ticker  & No. Segments &No.  Clusters & Cluster Size  \\ [0.5ex] 
 \midrule
 GLD    & 15 & 3 & 4, 6, 5 \\ 
 XLF & 21 & 3 & 9, 8, 4 \\ 
 IJS  & 18 & 6 & 3, 4, 4, 3, 2, 2 \\ 
 RYT & 17 & 4 & 4, 4, 5, 4 \\ 
\bottomrule
\end{tabular}}
\label{table:etfs}
\end{table}

\begin{table}[htpb]
\tbl{Volatility cluster structure for currency pairs}
 {\begin{tabular}{l c c c c} 
\toprule Ticker  & No. Segments &No.  Clusters & Cluster Size  \\ [0.5ex] 
 \midrule
 AUD/USD  & 14 &  3 & 5, 5, 4 \\ 
 EUR/USD & 14 & 2 & 8, 6 \\ %
 GBP/USD  & 13 & 4 & 3, 5, 3, 2 \\ %
 NZD/USD & 12 & 3 & 4, 5, 3 \\ %
\bottomrule
\end{tabular}}
\label{table:currs}
\end{table}

\subsection{Sensitivity of methodology}
\label{sec:sensitivity}

All results in Section \ref{results} are obtained over a 13-year period from 2008 to 2020. In this brief section, we discuss the effects of extending or subdividing our window of analysis, which will be particularly relevant for Section \ref{trading strat}. When the period is extended or subdivided, results may change in numerous ways. First, the change point algorithm used (described in Appendix \ref{cpa appendix}) feeds in one data point in at a time. It may continually detect new change points, subject to a lower bound of 30 days before a new change point is permitted. Thus, extending the period of analysis will necessarily only increase the number of determined segments.

On the other hand, suppose the total period is subdivided into smaller intervals. For example, suppose a longer period $a \leq t \leq b$ is partitioned into $a \leq t \leq c$ and $c \leq t \leq b$. Broadly speaking, the number of change points in $[a,b]$ will usually be the sum of the change points in $[a,c]$ and $[c,b]$. This does not necessarily hold all the time, for example if a change point in $[a,b]$ is detected close to $t=c$; then it would feature in $[a,b]$ but potentially in neither $[a,c]$ nor $[c,b]$. Translating this into the number of segments, this means that $n_{[a,b]}$ is usually $n_{[a,c]} +n_{[c,b]} - 1$, where $n_I$ is the number of segments observed within an interval $I$. However, this may not hold exactly.

The total number of clusters, however, is more difficult to predict regarding how it will behave with different period lengths. For example, as the analysis window is extended, more segments may be observed, yielding a larger space of segments to be clustered. However, as a space (in this case, consisting of probability densities) grows in cardinality, its determined number of clusters may change unpredictably. Usually, a larger cardinality of points in a high-dimensional space should produce a greater number of determined clusters. Yet this is not always the case. Consider a hypothetical example of a space neatly divided into three clusters (indexed 1, 2, and 3). If enough data points are added to ``bridge'' clusters 2 and 3, then clustering may subsequently detect only two clusters (1 and $2\cup 3$). In our implementation, we broadly notice that smaller periods of analysis produce fewer clusters. This will be relevant in the precise trading methodology of Section \ref{trading strat}.

Finally, we remark that even two adjacent segments may lie in the same determined cluster. Indeed, this was demonstrated in Section \ref{SP500}. Fortunately, this works well in the context of our methodology. Synthetic experiments demonstrate that the change point detection algorithm identifies more change points than the ground truth time series in some circumstances. Our clustering overcomes this potential issue of oversensitivity. In the event that an erroneous change point partitions two adjacent segments with similar statistical properties, the clustering method would determine that these segments exist in the same cluster, ameliorating the false positive of the change point algorithm.

\subsection{Discussion of findings}

Our empirical results demonstrate substantial heterogeneity in the volatility structure among the asset classes studied. The number of clusters ranges between 2 and 6, with the most common number of volatility regimes determined to be 3 or 4. The number of segments belonging to each clustering regime is relatively consistent among all asset classes. The one exception is at the height of the GFC, where a single segment is associated with one cluster. When a shorter time period is used, two regimes are consistently identified, which is consistent with prior findings \citep{Guidolin2011}.

Despite heterogeneity in the number of segments and clusters, we can still identify similarities between asset classes. For instance, the S\&P 500 and the Dow Jones both exhibit periods of high volatility in March 2010, April 2011, and late 2018. All but several asset classes exhibit a volatile clustering regime at the height of the GFC. Typically, this period belongs to the same cluster as the COVID-19 market crash of 2020 - highlighting similar market structure during periods of severe crisis \citep{James2021}. In addition, all five of the listed firms experienced a similarly volatile window associated with the US/China trade war in late 2018. By contrast, the ETFs under analysis did not exhibit similarly volatile periods - most likely due to their varied asset class mix.

In their typical formulation, regime-switching models require assumptions regarding the number of regimes and candidate distributions \textit{a priori}. They are often criticized for this highly parameterized structure \citep{Guidolin2011, Ang2012}. One meaningful advantage of our method, is the ability to account for extreme economic events and market crises with the flexibility of our method. No assumptions regarding the number of regimes or data generating process are required - indeed, we have validated our method both on Gaussian and non-Gaussian distributions. For those wishing to implement parametric regime-switching models, the methodology proposed in this work could be used as an accompaniment to algorithmically determine an appropriate number of regimes for various asset classes. As a further application, in the proceeding section we demonstrate how these results can be used to inform asset allocation decisions in the context of a dynamic trading strategy.


\section{Application of results: trading strategy}
\label{trading strat}

In recent times, passive investing has gathered more asset inflows than active investment management. In particular, index funds and ETFs that track major indices such as the S\&P 500 are a popular way of attaining broad market exposure for investors. We apply our analysis to the S\&P 500 index to determine a dynamic trading strategy that can simultaneously benefit from the index's appreciation while minimising risk. In Section \ref{SP500}, we determined that the S\&P 500 has two distinct volatility regimes, captured in two distinct clusters of volatility periods. Our contrived trading strategy is to buy and hold SPY, a tracker of the S\&P 500, in low volatility periods, and then flee to the safe haven of GLD, a gold bullion tracker, in high volatility periods. We improve on the previous work of \citet{Nystrup2016}, who uses a live implementation of the rank test to move away from the S\&P 500. This method has two drawbacks: first, as noted in Section \ref{results}, a change point does not necessarily indicate a change in regime; secondly, their method has an unpredictable delay in registering the change point, as discussed in Appendix \ref{cpa appendix}. 

Instead, we implement a dynamic procedure with a 4-year sliding window. Model parameters are learned within the prior window, and then applied to the proceeding four years of data. Suppose our algorithm begins with years 0 : 4. First, we analyze the SPY over the prior 4-year period, years -4 : 0. We determine the cluster structure of the distribution segments of the SPY over this prior period. To make investment decisions in the current period of 0 : 4 years, we try to match the present distribution with the most similar distribution in the prior window. We combine metric geometry and unsupervised learning for this purpose, minimizing a metric to one of an existing set of candidate segments. Specifically, we examine the present local distribution of the last $n$ days, where $n$ is a learned parameter, and determine the minimal distance between the present local distribution and the set of segment distributions of the prior 4-year period. We call $n$ the \emph{look back length} of the procedure. If this closest prior segment lies in (one of) the most volatile class of past distributions (characterized by greatest variance), we determine that the local distribution is volatile, and allocate all capital toward gold. This method works even if more than $2$ volatility clusters are found during the previous window. For example, if $r$ volatility regimes are determined to exist, the algorithm could avoid SPY if the current period is matched to the most volatile $\lceil r/2 \rceil$ such regimes, ordered by variance. Fortunately, in our implementation, the number of volatility regimes in the 4-year windows is always 2.

The parameter $n$ is optimized relative to the -4 : 0 year window. Specifically, having determined the cluster structure, $n$ is chosen to optimize the \emph{Sharpe ratio}, a well-established measure of risk-adjusted returns, when testing over that window. We optimize $n$ over a range $20 \leq n \leq 30$, that is, 4 to 6 trading weeks. Thus, $n$ is learned in this prior window and then used in the algorithm in the subsequent window. The window is then successively slid forward four years, and the process repeats. That is, model parameters estimated on years 0 : 4 are used to forecast in years 4 : 8, and so on.

There are several reasons for our choice of 4-year windows. First, it is a suitable compromise in the length of the training data. If the training period were too short, we could erroneously capture transient behaviours as persistent market dynamics. By contrast, if the training period were too lengthy, our strategy may fail to prioritize more recent and relevant dynamics. Specifically regarding our use of clustering, our trading strategy always makes decisions (in the present) having been trained on data that was sufficiently recent to learn up-to-date behaviour of volatility regimes, but sufficiently long to observe non-trivial clustering results. In addition, this adaptive sliding window technique allows us to convincingly validate the long-run performance of our trading strategy. Second, as discussed in Section \ref{sec:sensitivity}, the number of volatility regimes generally increases with the window length. Across 4-year training periods, our algorithm always determines $r=2$ to be the optimal number of clusters when we examine SPY in its partitioned chunks,  as shown in Table \ref{table:SPY4}. This makes the decision process of avoiding volatile regimes simpler and less ambiguous - if the current distribution is matched (by minimal distance) to the more volatile (of two) class of distributions, we allocate all capital towards gold. Third, a 4-year period is suitable as equity markets follow 4-year cycles, associated with the cyclicality of Kitchin cycles \citep{Korotayev} and the US presidential election \citep{Gartner1995}.

We analyze the strategy's performance in a period from immediately prior to the global financial crisis (GFC), up to the present day. Accordingly, our initial backtest period of -4 : 0 is 2004-2008, while our first period of trading, years 0 : 4, is 2008-2012. We compare the performance of our dynamic trading strategy with three other strategies: holding SPY, holding GLD, and a baseline strategy holding an equal split between the two. We use six common validation metrics to evaluate and compare our trading strategy. 

\begin{enumerate}
    \item Annualized return (AR): the total return a strategy yields relative to the time the strategy has been in place.
    \item The overall standard deviation (SD) of the portfolio.  
    \item Sharpe ratio (SR): 
    a common measure of risk-adjusted return. Unfortunately, this penalizes both upside and downside volatility. Some strategies with strong annualized returns may have lower Sharpe ratios due to erratic, yet positive return profiles.
    
    \item Maximum drawdown (MD): an alternative penalty function capturing the maximum peak to trough trading loss. 
    
    \item Sortino ratio (SoR): 
    an alternative measure of risk-adjusted return that only penalizes downside deviation in the denominator.
    \item Calmar ratio (CR): a measure of risk-adjusted returns that penalizes the maximum realized drawdown over some candidate investment period. 
\end{enumerate}

If our trading strategy were applied among a collection of specific assets rather than an index (for example, a group of components of the index as featured in Table \ref{table:big5_results}), the trading strategy could attain greater expected returns and higher risk-adjusted return ratios. In particular, while an index has a mean-reverting effect with respect to a large collection of stock returns, a smaller collection of carefully selected stocks may provide higher expected returns. Should the group of stocks be greater than $\sim30$, unsystematic (stock-specific) risk is diversified away for both the index and the portfolio of stocks. Both portfolios would have a risk component mostly comprised of systematic risk. This could lead to higher risk-adjusted returns for the trading strategy.

\begin{table}[htpb]
\tbl{Volatility cluster structure of SPY in four year increments}
{\begin{tabular}{l c c c c} 
 \toprule
 Ticker  & No. Segments & No. Clusters & Cluster Sizes \\ [0.5ex] 
 \midrule
 2000-2004   & 7 & 2 & 4,3 \\ 
 2004-2008 &  4 &  2 & 3,1 \\
 2008-2012  & 9 & 2  & 4,5  \\
 2012-2016  & 4 &2 & 3,1 \\
 2016-2020 & 8 & 2 & 5,3\\ 
 2020-2021   & 3 & 2  & 2,1  \\
\bottomrule
\end{tabular}}
\label{table:SPY4}
\end{table}

\subsection{Model performance: 2008-2020}

Implementing our trading strategy between January 2008 and December 2020 would have been highly successful for both risk-averse and risk-on investors. Seen in Table \ref{table:2020bt} and Figure \ref{fig:2020bt}, the strategy consistently outperformed the S\&P 500 index and overall generated annualized returns of 9.4\%. The S\&P 500 returned 7.5\% while the static baseline strategy returned 7.6\%. The strategy clearly generates alpha by its dynamic nature, automatically detecting market regimes and allocating capital successfully. This entire period can broadly be characterized as a bull market, and yet features several severe market shocks; the strategy's consistent performance demonstrates its robustness to varied market dynamics. Figure \ref{fig:2020pos} shows the positions held by the strategy.

\begin{table}[ht]
\tbl{Validation metrics: January 2008 - December 2020}
{\begin{tabular}{l c c c c c c c} 
\toprule
Strategy & AR & SD & SR & MD & SoR & CR  \\ [0.5ex] 
\midrule
Hold SPY & 0.075 & 0.013 &  0.45  & 0.53 & 0.63 & 0.14 \\ 
Hold GLD & 0.057 & 0.011 &  0.40  & 0.46 & 0.56 & 0.13 \\ 
 Baseline & 0.076  & 0.0087 &  0.33  & 0.60 & 0.85 & 0.23 \\ 
\textbf{Dynamic} & \textbf{0.094}  & \textbf{0.010} &  \textbf{0.63}  & \textbf{0.27} & \textbf{0.89} & \textbf{0.35} \\
\bottomrule
\end{tabular}}
\label{table:2020bt}
\end{table}

\begin{figure}[htpb]
\centering
\includegraphics[width=\textwidth]{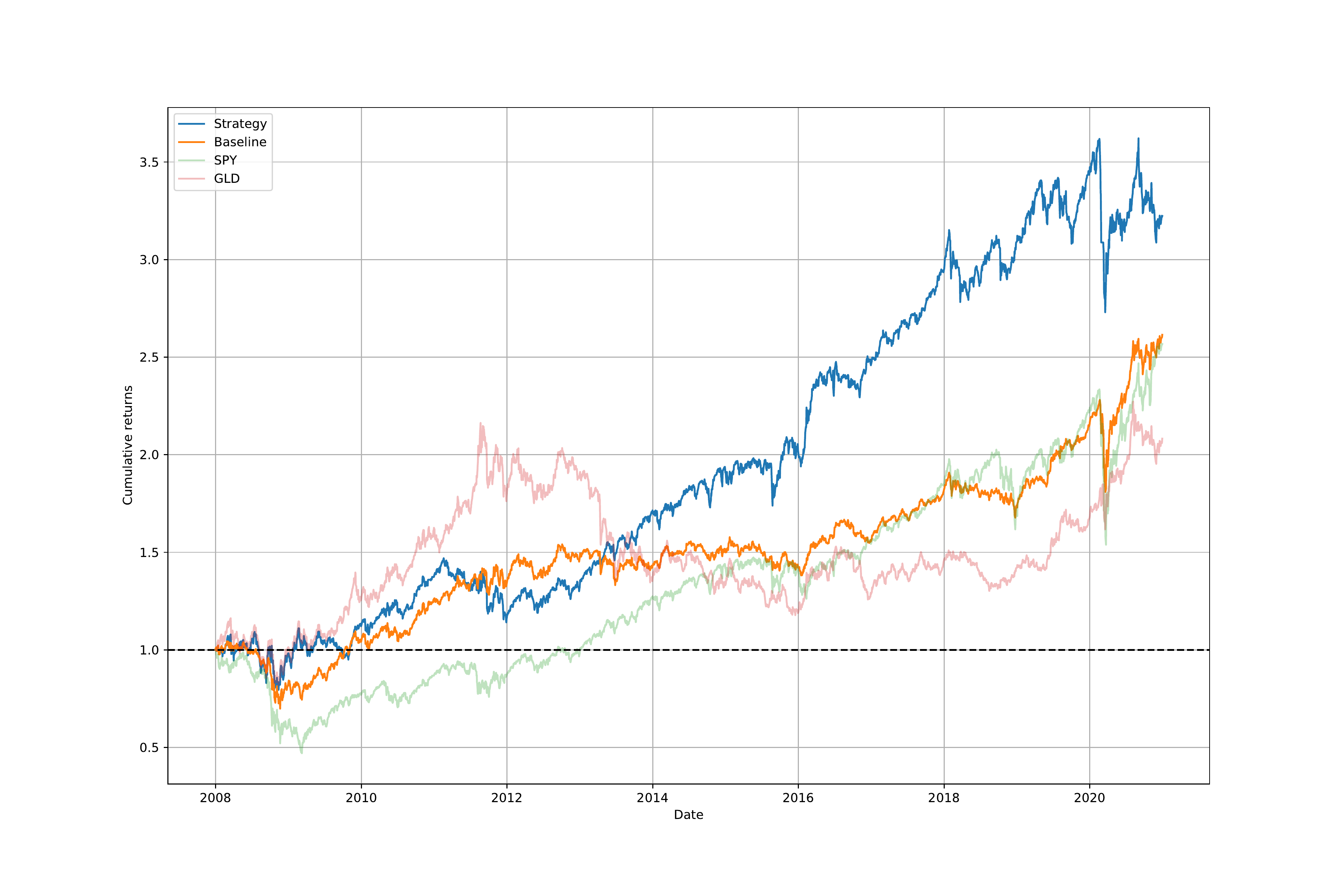}
\caption{Cumulative returns: January 2008 - December 2020}
\label{fig:2020bt}
\end{figure}

\begin{figure}[htpb]
\centering
\includegraphics[width=\textwidth]{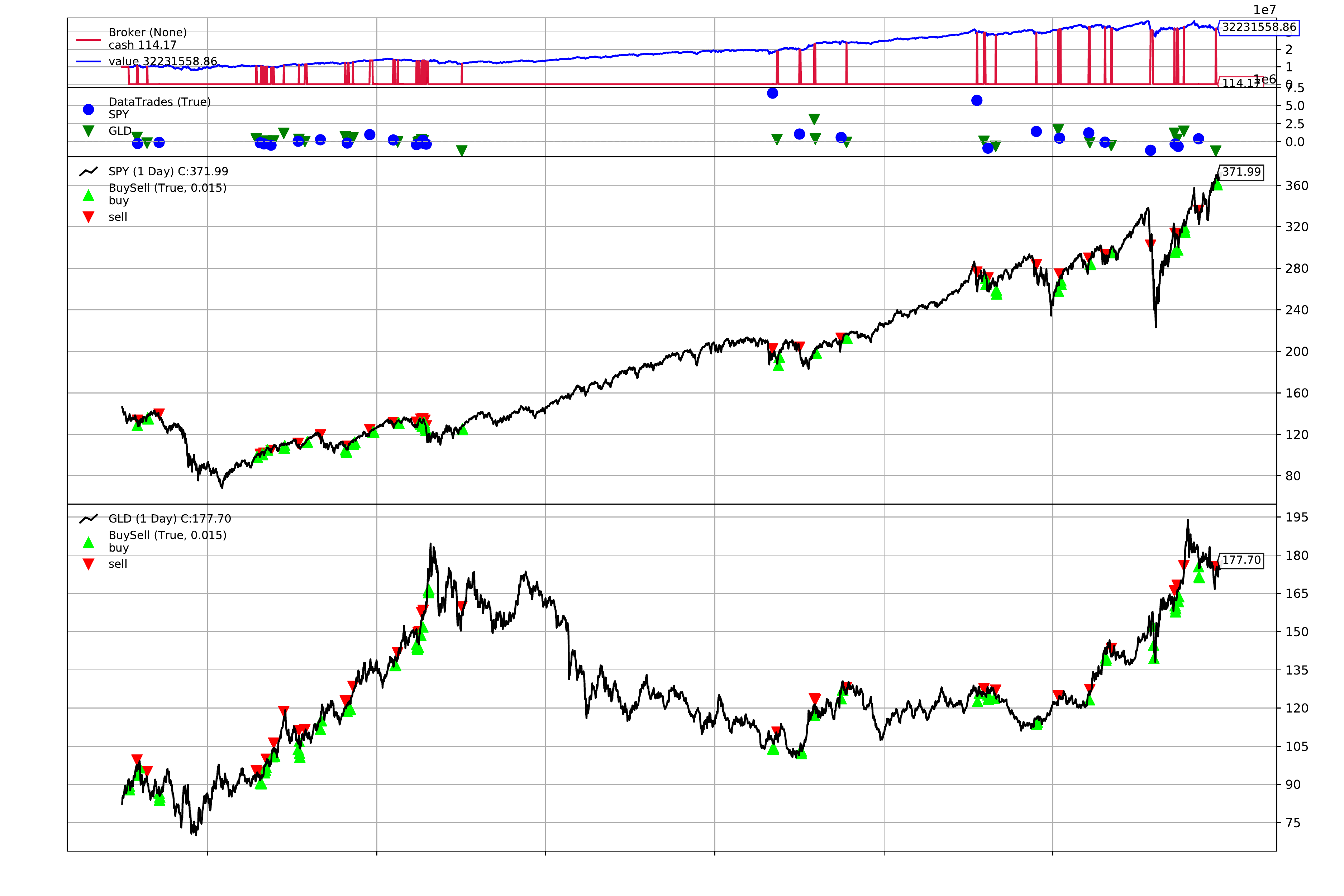}
\caption{Positions held by dynamic strategy: January 2008 - December 2020}
\label{fig:2020pos}
\end{figure}

Of the four strategies compared, our dynamic trading strategy has the best annualized returns, Sharpe ratio, Sortino ratio and Calmar ratio, and lowest drawdown. It has the second-lowest standard deviation of 0.10, close to the baseline static strategy's 0.0087. The most significant component of the Sharpe ratio's performance comes from strong annualized returns; the increased upside volatility is the main contributor to the standard deviation. Indeed, our strategy's Sortino ratio is about 33\% greater than that of the S\&P 500; this confirms that a significant degree of the penalty in the standard deviation and Sharpe ratio is generated from upside returns. That is, the strong annualized returns of our trading strategy are generated in a relatively volatile manner. This is unsurprising, given that the strategy generates performance due to market timing.

\subsection{Detailed analysis of performance over time}
In this section, we describe the performance in detail over various time periods, particularly during market crises. While we have reported our findings over one period 2008-2020, in fact, four separate learning and evaluation procedures have been performed. All four periods were successful for our strategy, visible in Figure \ref{fig:2020bt}.

First, the strategy performs well during the GFC. Our strategy generates the second-best returns during the GFC, surpassed only by gold. During the GFC, gold provided extraordinary returns for investors who 
invested prior to or during the crisis. After incurring a sharp drawdown, our strategy reallocates capital from S\&P 500 into gold and consequently outperforms equity markets until late 2011.

Next, the market experienced significant drawdown in December 2018. Given the brevity of this drawdown, our trading strategy is unable to reallocate capital away from the S\&P 500 into gold fast enough to meaningfully reduce the strategy's drawdown. After all, our strategy is predicated on identifying regimes, and allocating capital when new data are identified as similar to past phenomena. It reflects the delicate balance in the look back length $n$. If it were too long, trading decisions would be made too slowly; if it were too short, trading decisions would be made too frivolously.

The final significant market crisis during our window of analysis is the market turbulence associated with COVID-19. Our strategy performs well during this period. While it does experience a dip around March of 2020, it does allocate funds away from SPY quite early and avoids a much larger crash. The algorithm then benefits from the V-shaped recovery and reaches its previous peak before switching to GLD for the remainder of the year, leading to a small drawdown. The algorithm reverts back to SPY at the very last trade. We expect the algorithm to benefit from the market rally around the 2021 economic stimulus plan and the possibility of COVID herd immunity through vaccines.

During the four 4-year windows that make up the 2008-2020 experiment, the optimal look back length $n$ changes as follows. For the four windows, the optimized $n$ is 20, 25, 20, 27 and 20 for 2004-2008, 2008-2012, 2012-2016, 2016-2020, Jan 2020- Dec 2020 respectively. This suggests that continually updating the look back length is important, due to the dynamic nature of markets. The longest look back length is during 2012-2016, a bull market period with the greatest consistency and least volatility in the return profile. This suggests that regimes were more persistent and possibly easier to identify during the 2012-2016 period.

\section{Conclusion}
\label{conclusion}

This paper demonstrates an original means of clustering volatility regimes, highlighting it as a useful tool for descriptive analysis of financial time series and designing trading strategies. Results on both synthetic and real data are promising, with good validation scores across a range of synthetic data and significant simplification of real time series. The findings support previous work by \citet{Hamilton1989}, and \citet{Lamoureux1990} and many others who contributed to the idea of discrete changes in volatility regimes. Moreover, while previous models generally select the number of regimes in advance, our model applies self-tuning unsupervised learning to determine the number of clusters in its implementation. In real data, we showed that this is usually between 2 and 4, while remaining flexible enough to detect more during crises. Our method integrates well with others in the literature \citep{Guidolin2011,Ang2012,Campani2021}, as the determined number of volatility regimes can then be used in an alternative regime-switching model that requires this quantity to be set \textit{a priori}.

Additionally, the dynamic trading strategy performs well at avoiding periods of significant volatility and drawdown, and performs substantially better than the SPY in various market conditions. Our method continually updates its distributions and parameters, reflecting the need for ongoing learning of market conditions and volatility structure. The method is flexible and also integrates with other statistical and machine learning methods. For instance, one could replace the static safe haven of gold with a learned allocation of low beta assets as a dynamic safe haven.

The precise methodology and applications described in this paper are not an exhaustive representation of the utility of this method. As long as there is consistency between the regime characteristic of interest, the change point algorithm, and the distance metric between distributions, the method could easily be reworked for classification of regimes of alternative characteristics, and in other domains of study. Future work could make several substitutions or improvements to the methodology. One could apply change point tests that detect changes in the mean to identify clusters with positive or negative returns. Different clustering algorithms could be used to uncover novel structures: for example, DBSCAN \citep{ester1996density} could be used to find outlier distributions, or fuzzy clustering could be used to find distributions that might belong to multiple clusters. These additional insights could be useful for making new inferences about the time series. Currently, we recompute the distributions every four years; instead, we could make additional use of online change point detection to optimize this period in a more sophisticated manner. We could integrate more sophisticated procedures from metric geometry than simply minimizing the distance to the prior distributions. Finally, we could conceivably combine other machine learning methods of predicting volatility and online decision-making with our distance-based detection of high volatility historical regimes.

\section*{Acknowledgements}
Many thanks to Kerry Chen and Alex Judge for helpful edits and discussion.

\section*{Disclosure statement}
No potential conflict of interest was reported by the authors.

\section*{Funding}
No funding was received for this research.

\appendix
\section{Details of change point detection algorithm}
\label{cpa appendix}

In this section, we describe the \emph{change point detection} framework. Developed by \citet{Hawkins1977,Hawkins2005}, change point algorithms seek to determine breaks in a time series at which the stochastic properties of the underlying random variables change, and have become instrumental in time series analysis.

\subsection{General change point detection framework}
First, we outline the change point detection framework in general. A sequence of observations $x_1,x_2,...,x_n$ are drawn from random variables $X_1, X_2,...,X_n$. We wish to determine points $\tau_1,...,\tau_m$ at which the distributions change. One assumes that the underlying random variables are independent and identically distributed between change points. One can summarize this with the following notation, following \citet{Ross_cpm}:

\begin{equation*}
    X_{i} \sim 
    \begin{cases}
      F_{0} \text{ if } i \leq \tau_1 \\
      F_{1} \text{ if } \tau_1 < i \leq  \tau_2  \\
      F_{2} \text{ if } \tau_2 < i  \leq \tau_3,  \\
      \hdots
    \end{cases}
\end{equation*}

That is, one assumes $X_i$ is a random sampling of a different distribution over each time period $[\tau_{i},\tau_{i+1}]$. In order to meet the apparently restrictive assumption of independence of the data, one must usually perform an appropriate transformation of the data. The log quotient transformation, which yields the log returns from the closing price data, is one such transformation \citep{Gustafsson2000}.

\subsection{Rank of observations and Mood Test}
\label{Mood test}
\citet{Ross2013} points out the fact that log returns often exhibit heavy-tailed behaviour. As a result, a nonparametric test is needed to detect change points that do not \textit{a priori} assume the distribution of the data. The \emph{rank test} is one such test. Suppose there are two samples of observations from unknown distributions $A = \{r_{1,1} ,r_{1,2},...,r_{1,m} \}$ and $B = \{r_{2,1},r_{2,2},...,r_{2,n} \}$. Define the \emph{rank} of an observation $r \in A \cup B$ as follows:

\begin{align*}
\rank(r) &= \sum^{m}_{j}\mathbbm{1}_{(r \geq r_{1,j})} + \sum^{n}_{j}\mathbbm{1}_{(r \geq r_{2,j})} = \# \{ s \in A \cup B: r \geq s\} 
\end{align*}
A larger rank indicates a higher positioning in the ordering of the elements of $A$ and $B$. If both sets of samples have the same distribution, the median rank among $\{ \rank(r): r \in A \cup B \}$ is $\frac{1}{2}(n + m + 1)$. In this case, one would assume that both sets have a near equal split of the ranks.

The \emph{Mood test} determines the extent that each observation's rank differs from the median rank, thereby detecting differences in the distributions' variance. If the samples have different variances, then one set of samples would have more extreme values than the other, which means the ranks would not be even between the two sets. Specifically, the test statistic is as follows:
$$
M'_{m,n} = \sum_{i=1}^{m} (\rank(r_{1,i}) - (m+n+1)/2)^{2}
$$
This is appropriately normalized:
\begin{align*}
N &= m +n\\
\mu_{M'} &= \frac{1}{12}m(N^{2} - 1)\\
\sigma_{M'} &= \frac{1}{180}mn(N + 1)(N^{2} -4)\\
M_{mn} &= \frac{1}{\sigma_{M'}}(|M' - \mu_{M'}|)
\end{align*}
If $M_{mn}$ is greater than some threshold $h$, we reject the null hypothesis that the distributions have the same variance, and conclude they have different variances. As depicted in \ref{app4}, the log return time series are tail heavy but strongly mean and median centred. Thus, the Mood test reliably detects changes in the variance without being affected by changes in the median. Compare Sections 4 and 5 of \citet{mood1954} for this distinction.

\subsection{CPM algorithm} 
\label{cpm}
Ross' CPM algorithm \citep{Ross_cpm} works by feeding in one data point at a time. When a change point $\tau$ is detected, the algorithm restarts and proceeds from that point, so it suffices to describe how the algorithm determines its very first change point.

Suppose $x_1,...,x_N$ is a sequence for which no change point has been detected. For each $m=1,2,...,N$ define $n=N-m$, mirroring the notation of \ref{Mood test}, and compute the Mood test statistic $M_{m,n}$. If the maximum among these, $M_N=\max_{m+n=N} M_{m,n},$ exceeds a threshold parameter $h_N$, we declare a change point in the variance has occurred at $\hat{\tau}=\argmax_{m} M_{m,n}$. If the maximum such test statistic does not exceed the threshold parameter, feed in the next data point $x_{N+1}$ and continue. If a change point $\hat{\tau}=m$ is detected at time $N$, there has been a delay of $n$ units in its detection. This delay is necessary for the algorithm to examine data points on each side of the change point. The algorithm then restarts from the change point $\hat{\tau}$.

In our implementation of the algorithm, we read in at least 30 values before looking for another change point, so that all stationary periods have length at least 30. We choose our parameters $h$ in order to manage the number of false positives (Type I errors). Given an acceptability threshold $\alpha$, the following equations specify that this error should remain constant over time:

\begin{align*}
P(M_1 > h_1) &= \alpha \\
P(M_t > h_t | M_{t-1} \leq h_{t-1}, ...., M_1 \leq h_1) &= \alpha 
\end{align*}

In the event that no change point exists, a false positive will nonetheless be detected at time $1/\alpha$ on average. This quantity is the average run length parameter $\text{ARL}_0$ that is passed to CPM, which in term calculates the appropriate choice of $h_t$. In this case, $ARL_0$ is set to 10,000.

\section{Plots}
\label{app4}

We include detailed results for the experiments in Section \ref{results}. Figures \ref{fig:dow}, \ref{fig:n225} and \ref{fig:msci} depict the clustered distributions and volatility regimes of the Dow Jones, Nikkei and MSCI indices, respectively. Figures \ref{fig:aapl}, \ref{fig:amzn} and \ref{fig:brka} do so for individual firms Apple, Amazon and Berkshire Hathaway, respectively. Figures \ref{fig:xlf}, \ref{fig:ijs} and \ref{fig:ryt} depict popular ETFs XLF, IJS and RYT, respectively. Figures \ref{fig:aud}, \ref{fig:GBP} and \ref{fig:nzd} depict the distributions and regimes for the  AUD/USD, GBP/USD and NZD/AUD, respectively.
\vspace{1em}

\noindent We remark that all distribution plots are strongly centred in mean and median about zero. This is an important technical point for the Mood test to work correctly to detect changes in variance, as described in Appendix \ref{cpa appendix}.

\begin{figure}[htbp]
\centering
\subfigure[Kernel density estimation plots of Dow Jones distributions]{%
\resizebox*{14cm}{!}{\includegraphics{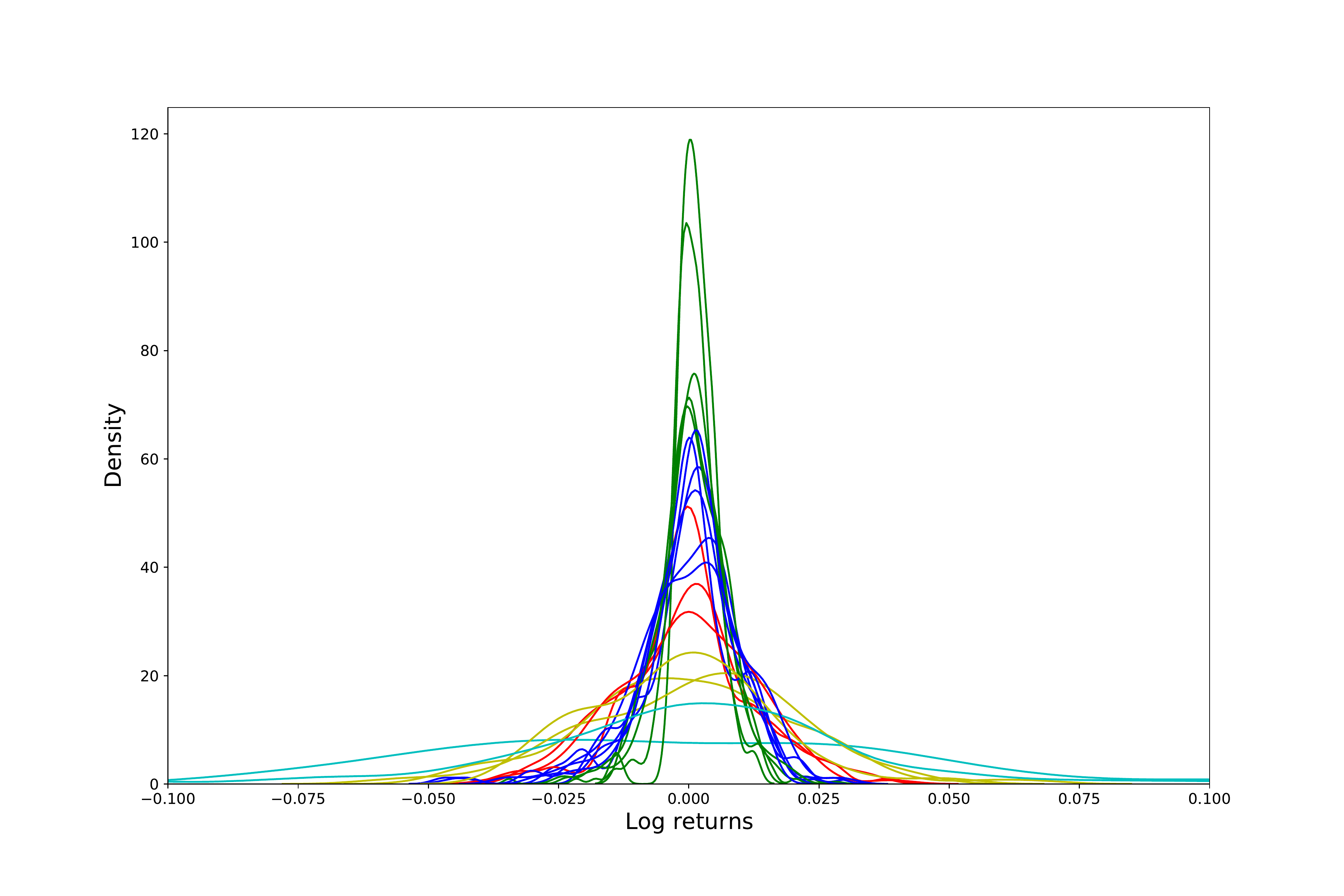}}}
\subfigure[Five determined volatility regimes for Dow Jones]{%
\resizebox*{14cm}{!}{\includegraphics{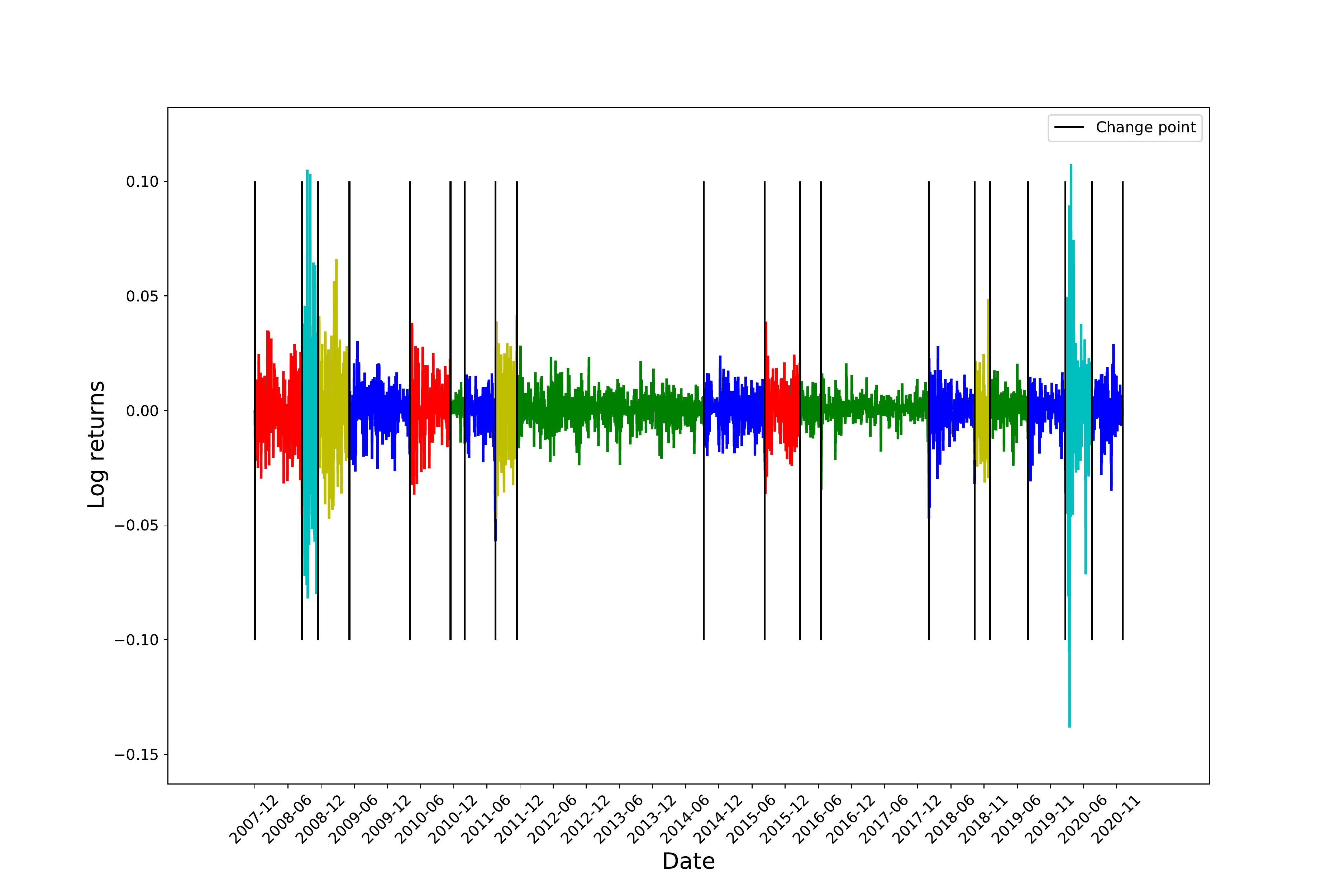}}}\hspace{5pt}
\caption{Volatility clustering results for Dow Jones} \label{fig:dow}
\end{figure}

\begin{figure}[htbp]
\centering
\subfigure[Kernel density estimation plots of Nikkei distributions]{%
\resizebox*{14cm}{!}{\includegraphics{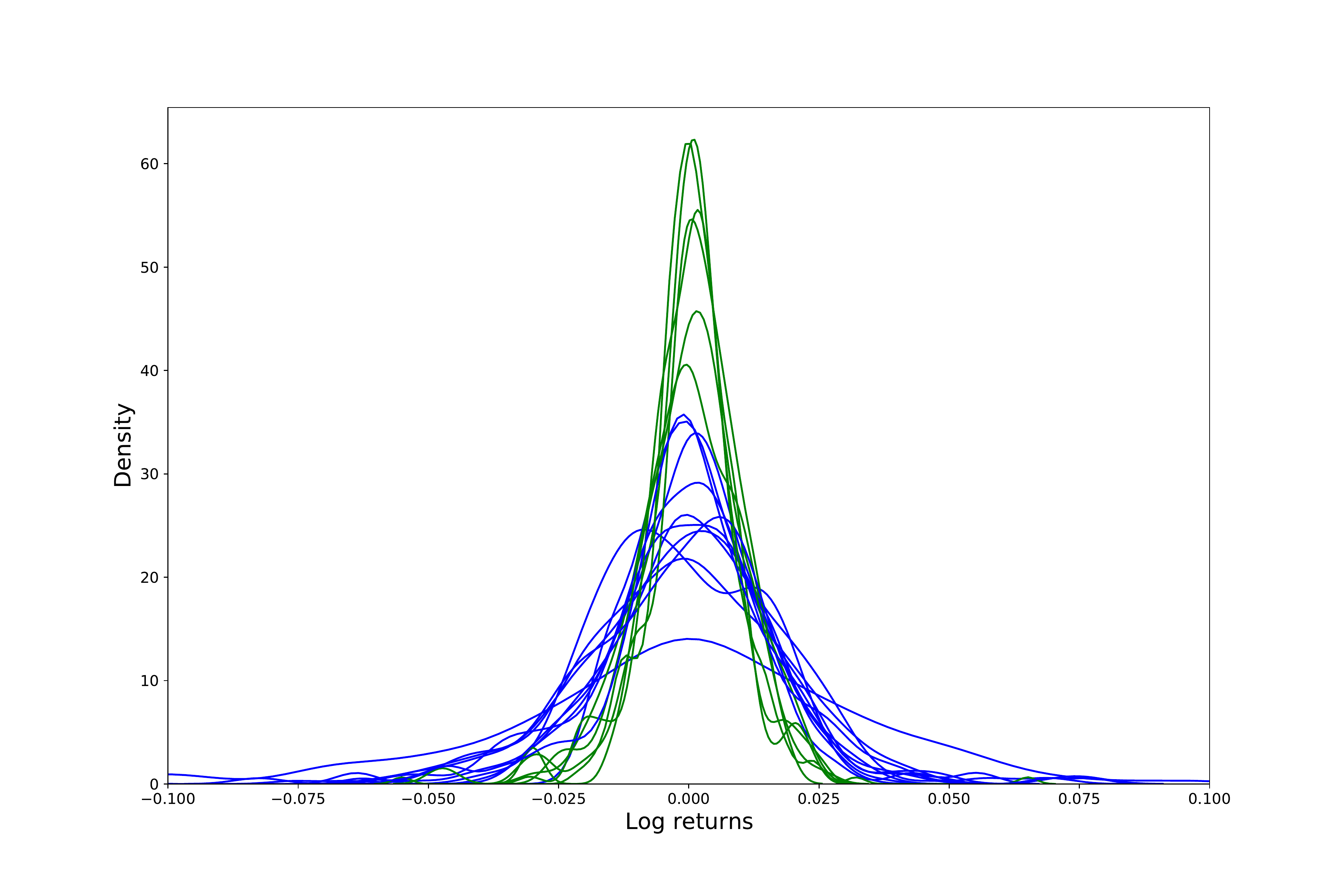}}}
\subfigure[Two determined volatility regimes for Nikkei]{%
\resizebox*{14cm}{!}{\includegraphics{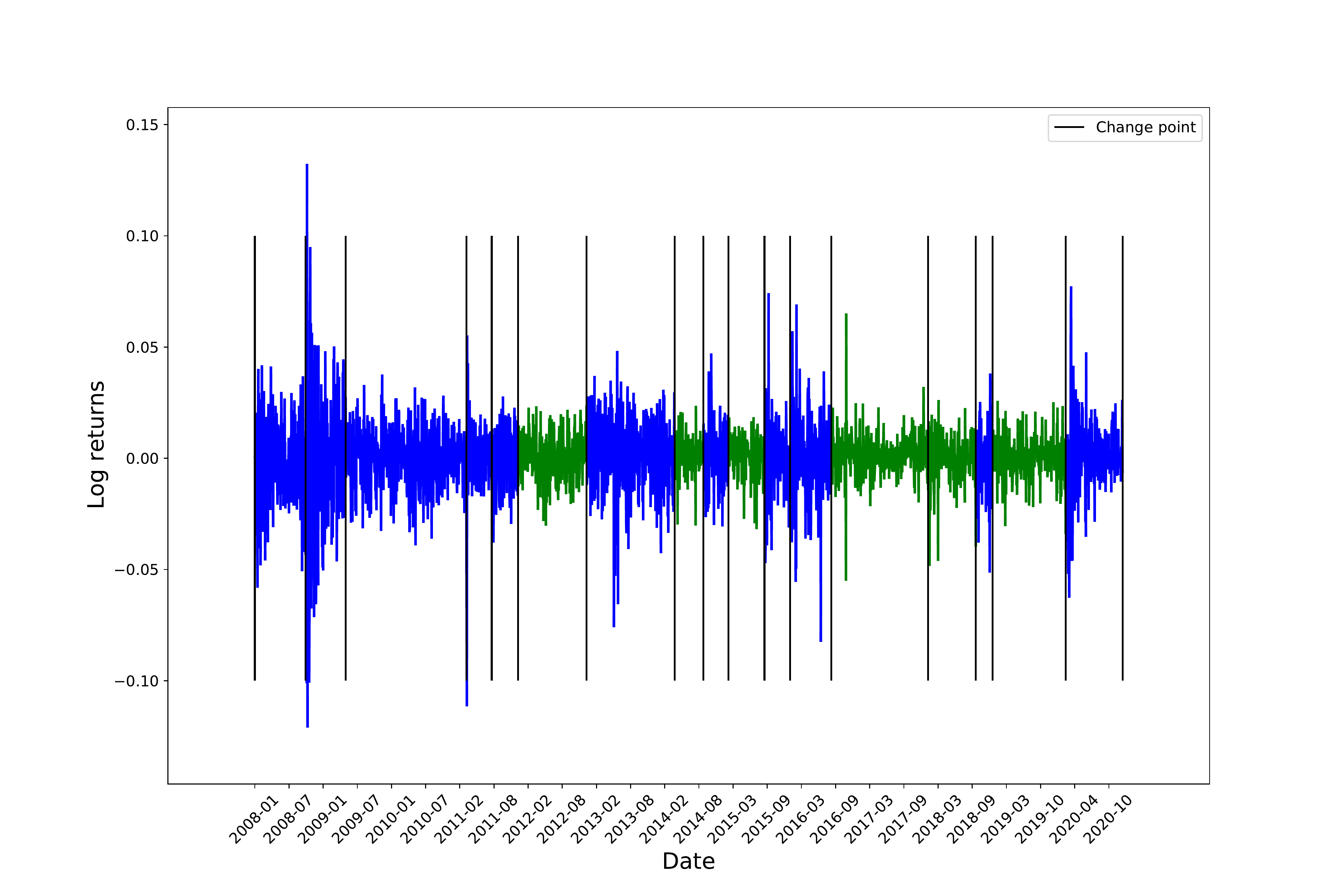}}}\hspace{5pt}
\caption{Volatility clustering results for Nikkei} \label{fig:n225}
\end{figure}

\begin{figure}[htbp]
\centering
\subfigure[Kernel density estimation plots of MSCI distributions]{%
\resizebox*{14cm}{!}{\includegraphics{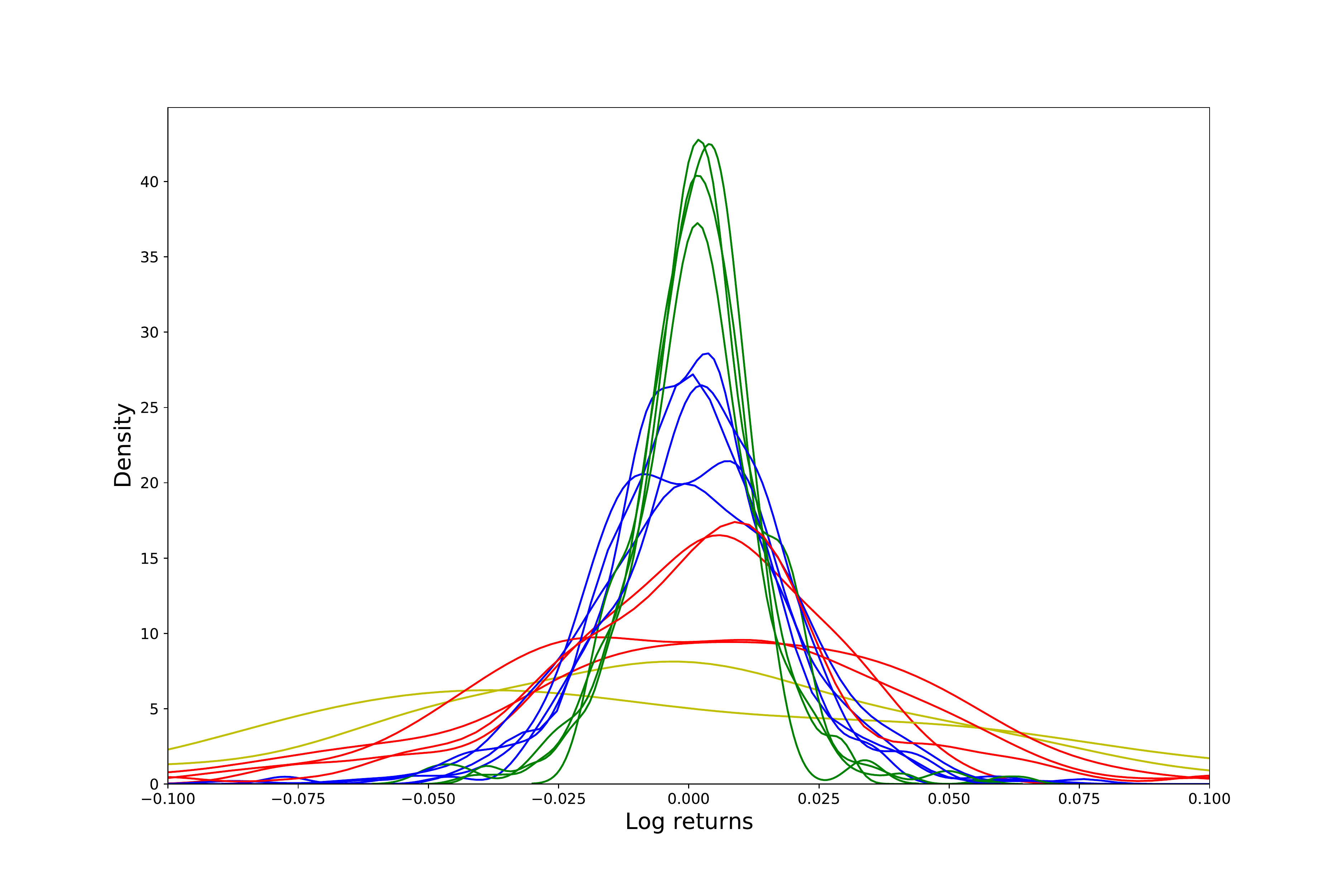}}}
\subfigure[Four determined volatility regimes for MSCI]{%
\resizebox*{14cm}{!}{\includegraphics{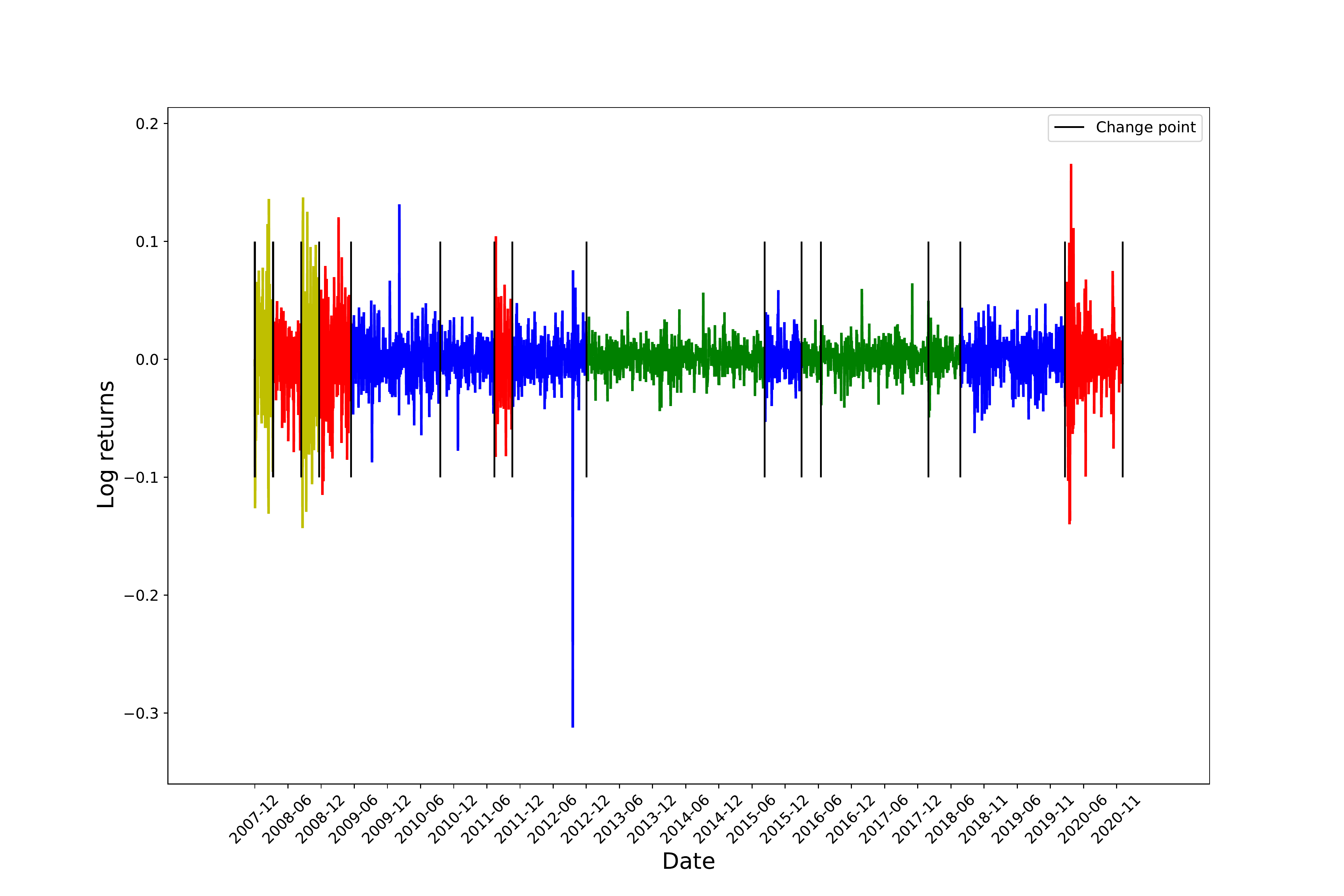}}}\hspace{5pt}
\caption{Volatility clustering results for MSCI} \label{fig:msci}
\end{figure}

\begin{figure}[htbp]
\centering
\subfigure[Kernel density estimation plots of AAPL distributions]{%
\resizebox*{14cm}{!}{\includegraphics{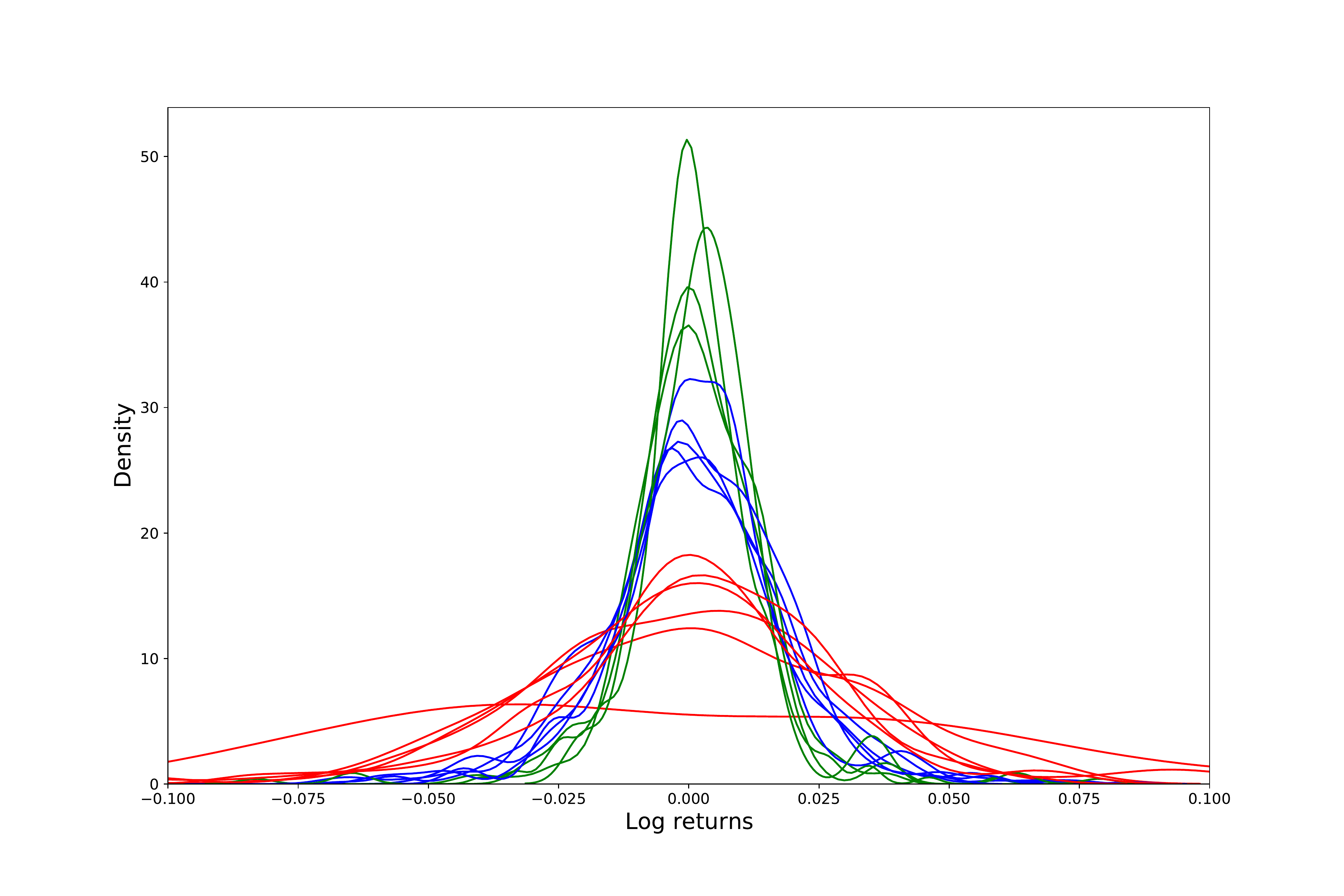}}}
\subfigure[Three determined volatility regimes for AAPL]{%
\resizebox*{14cm}{!}{\includegraphics{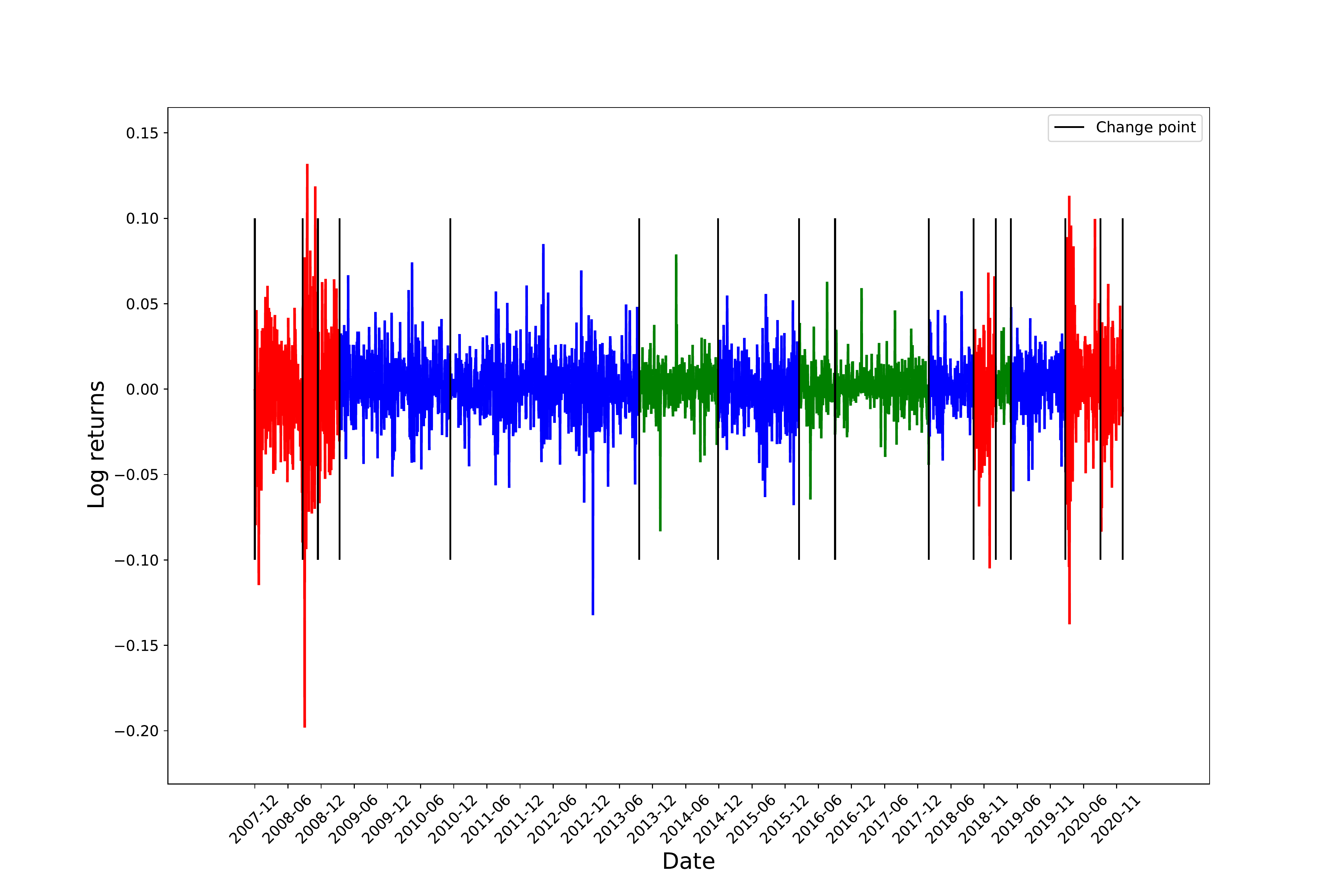}}}\hspace{5pt}
\caption{Volatility clustering results for AAPL} \label{fig:aapl}
\end{figure}

\begin{figure}[htbp]
\centering
\subfigure[Kernel density estimation plots of AMZN distributions]{%
\resizebox*{14cm}{!}{\includegraphics{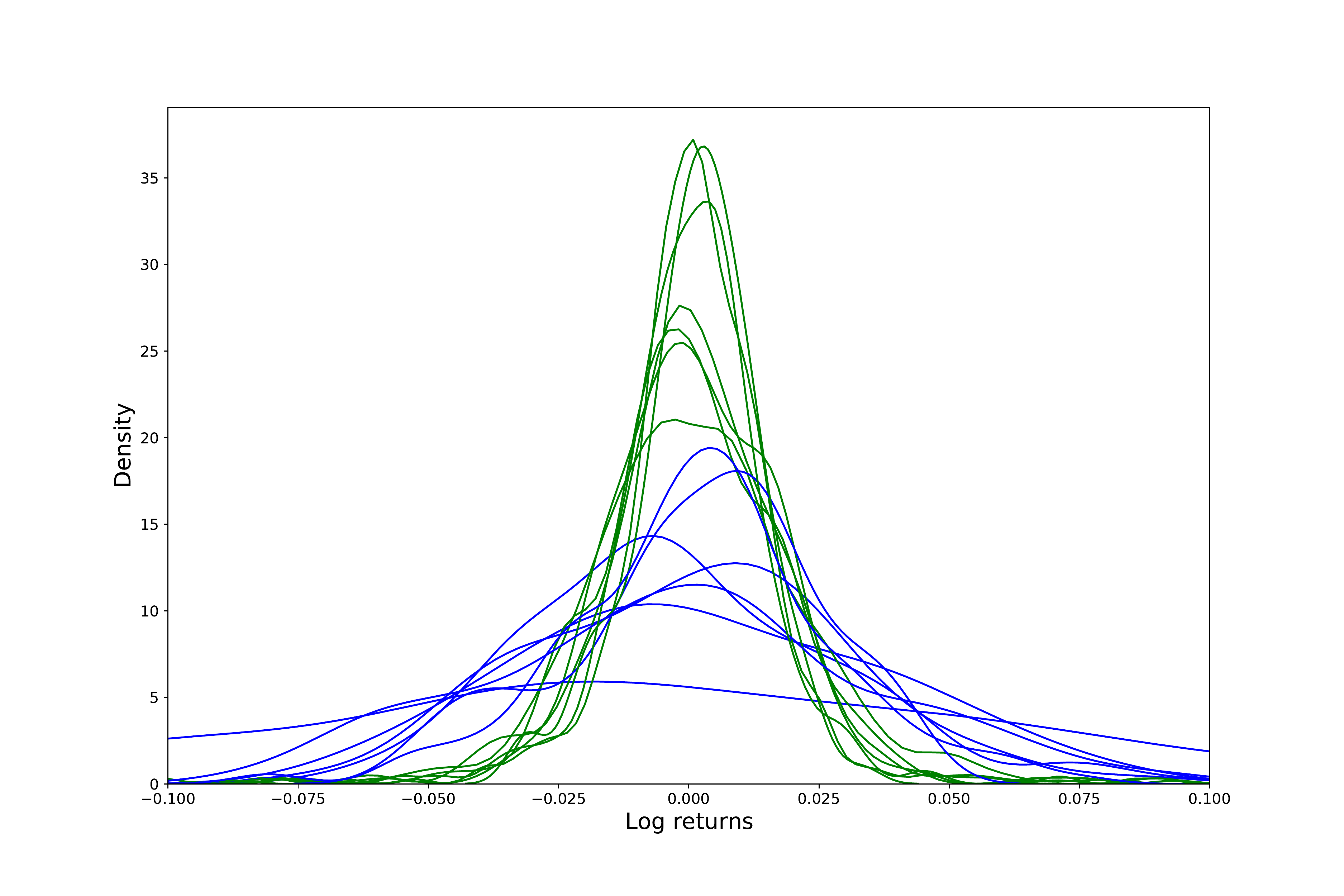}}}
\subfigure[Two determined volatility regimes for AMZN]{%
\resizebox*{14cm}{!}{\includegraphics{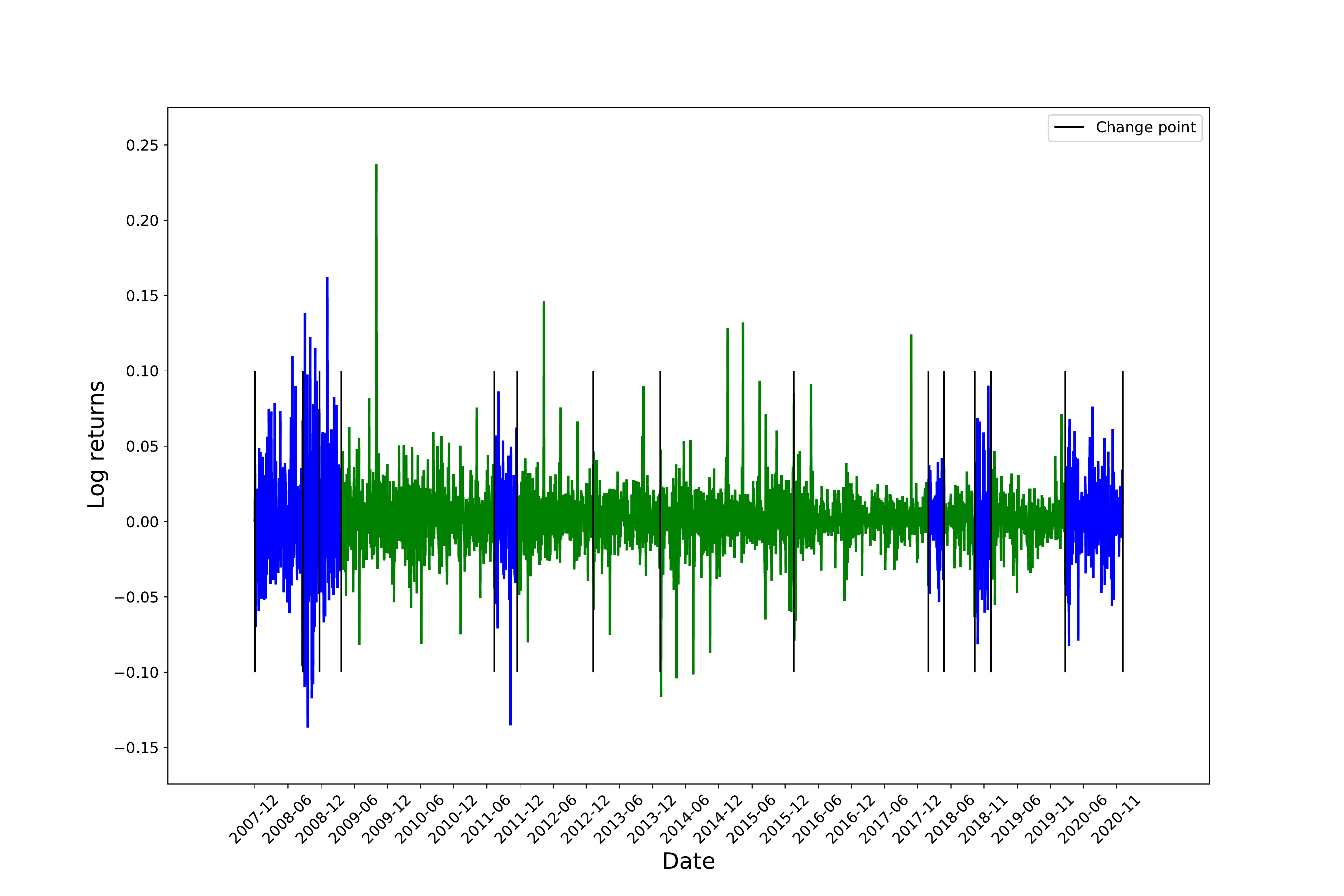}}}\hspace{5pt}
\caption{Volatility clustering results for AMZN} \label{fig:amzn}
\end{figure}

\begin{figure}[htbp]
\centering
\subfigure[Kernel density estimation plots of BRK-A distributions]{%
\resizebox*{14cm}{!}{\includegraphics{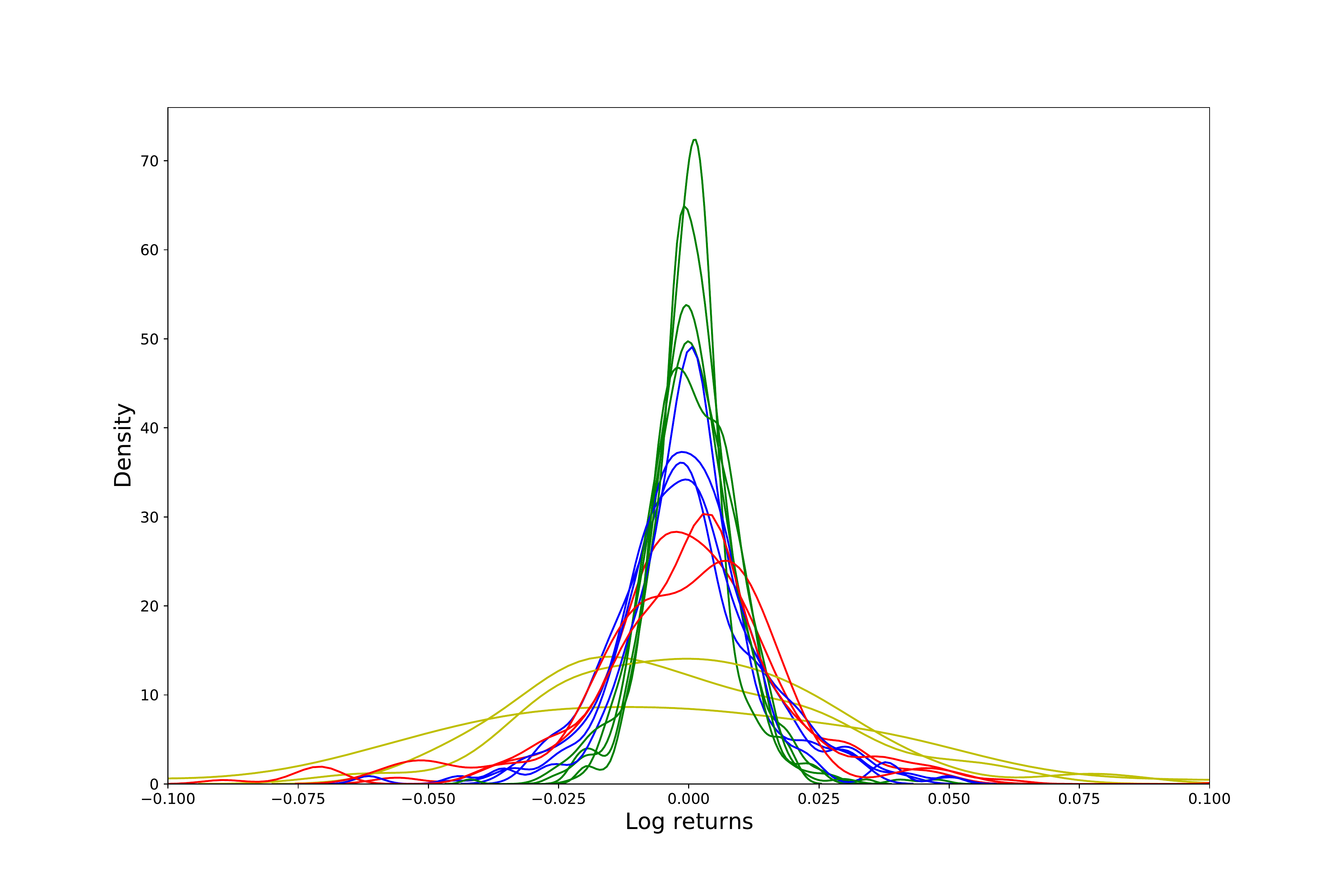}}}
\subfigure[Four determined volatility regimes for BRK-A]{%
\resizebox*{14cm}{!}{\includegraphics{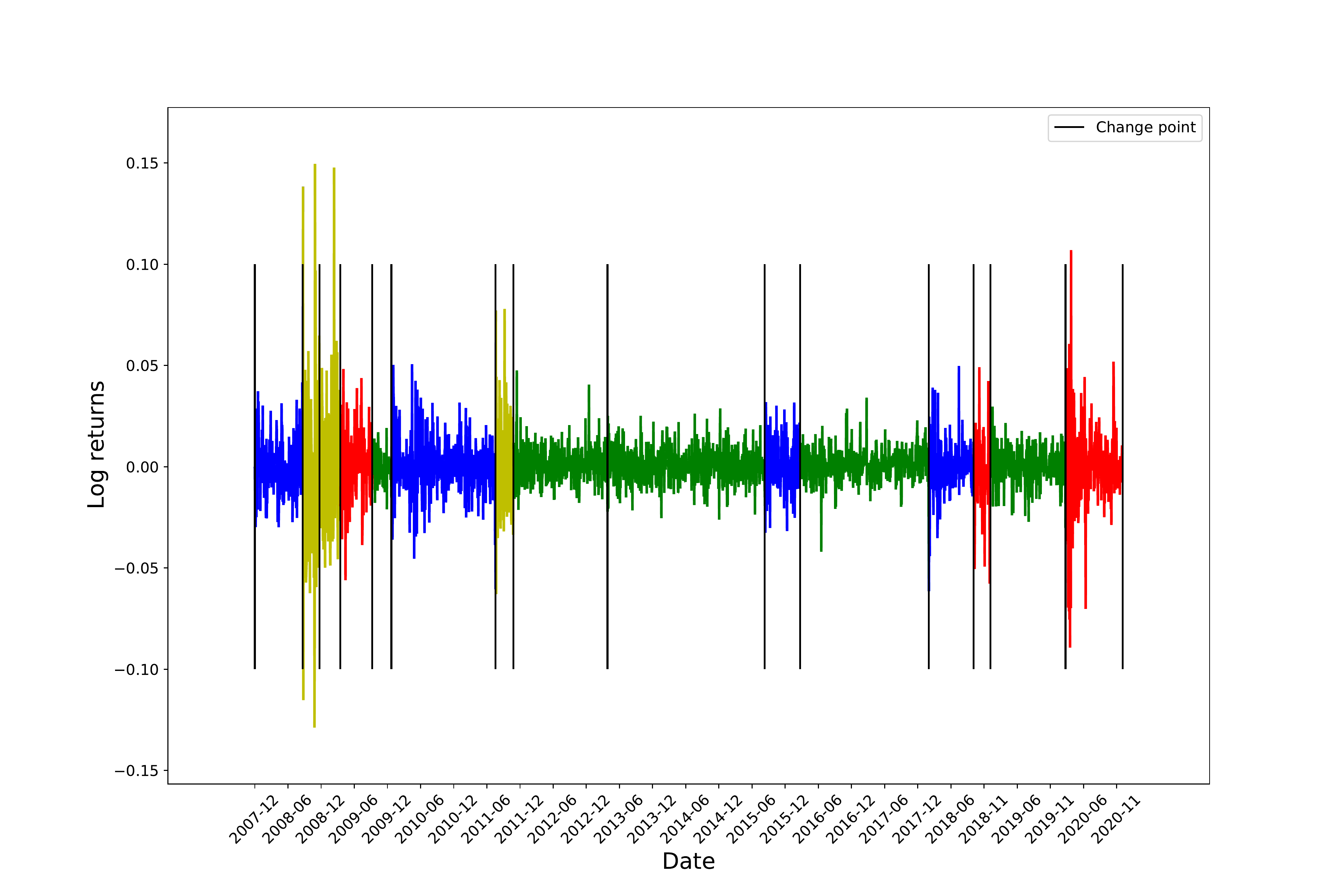}}}\hspace{5pt}
\caption{Volatility clustering results for BRK-A} \label{fig:brka}
\end{figure}

\begin{figure}[htbp]
\centering
\subfigure[Kernel density estimation plots of XLF distributions]{%
\resizebox*{14cm}{!}{\includegraphics{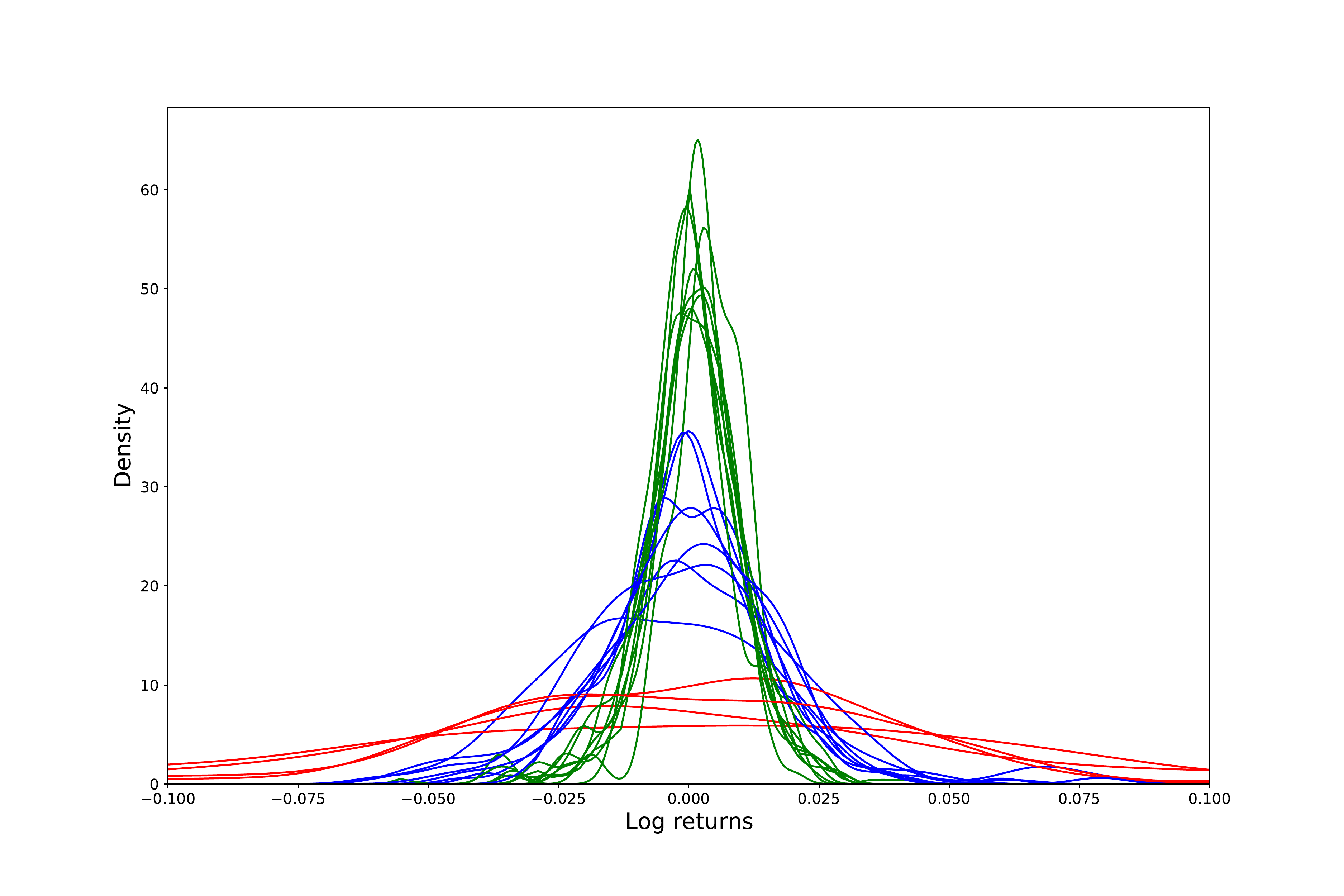}}}
\subfigure[Three determined volatility regimes for XLF]{%
\resizebox*{14cm}{!}{\includegraphics{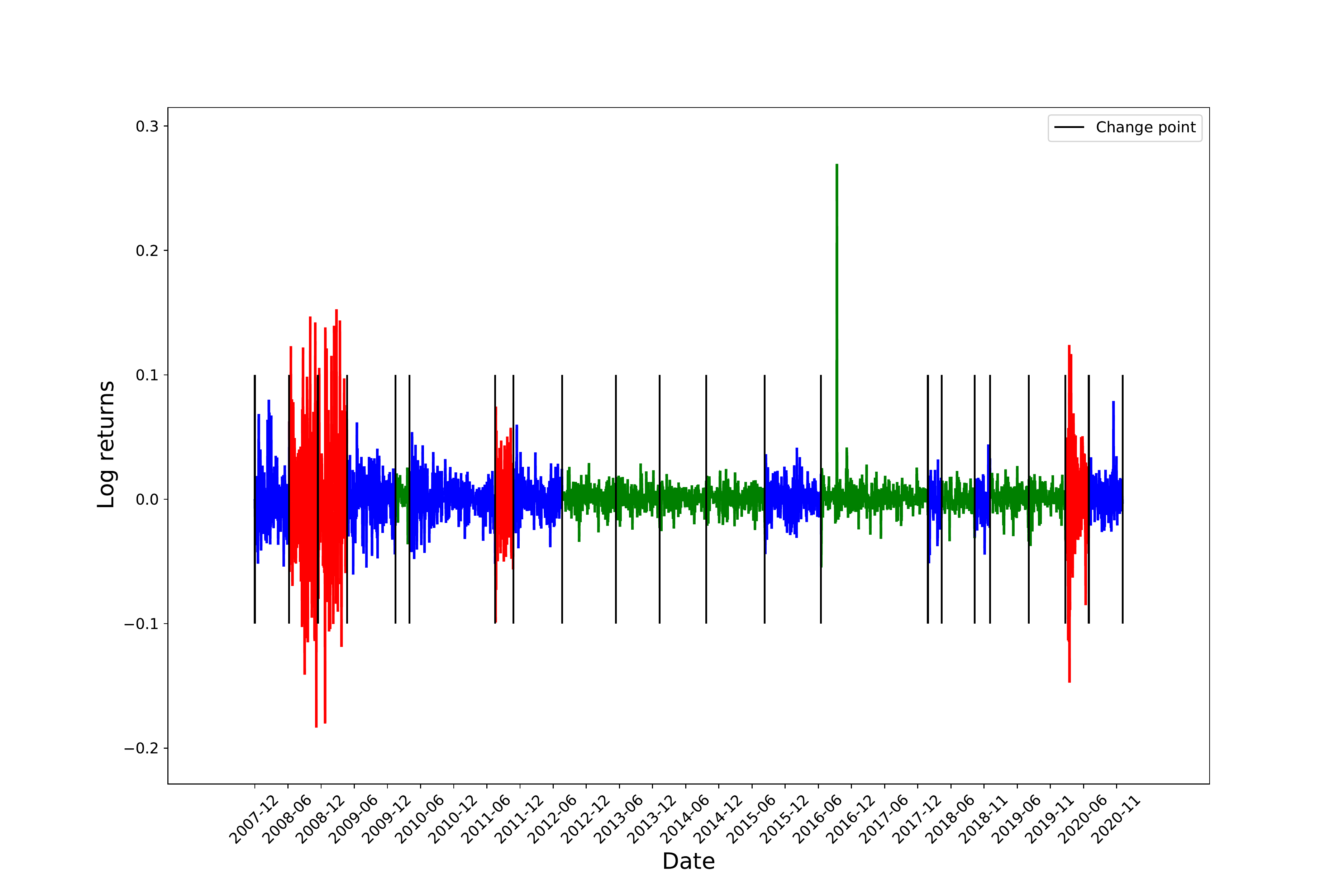}}}\hspace{5pt}
\caption{Volatility clustering results for XLF} \label{fig:xlf}
\end{figure}

\begin{figure}[htbp]
\centering
\subfigure[Kernel density estimation plots of IJS distributions]{%
\resizebox*{14cm}{!}{\includegraphics{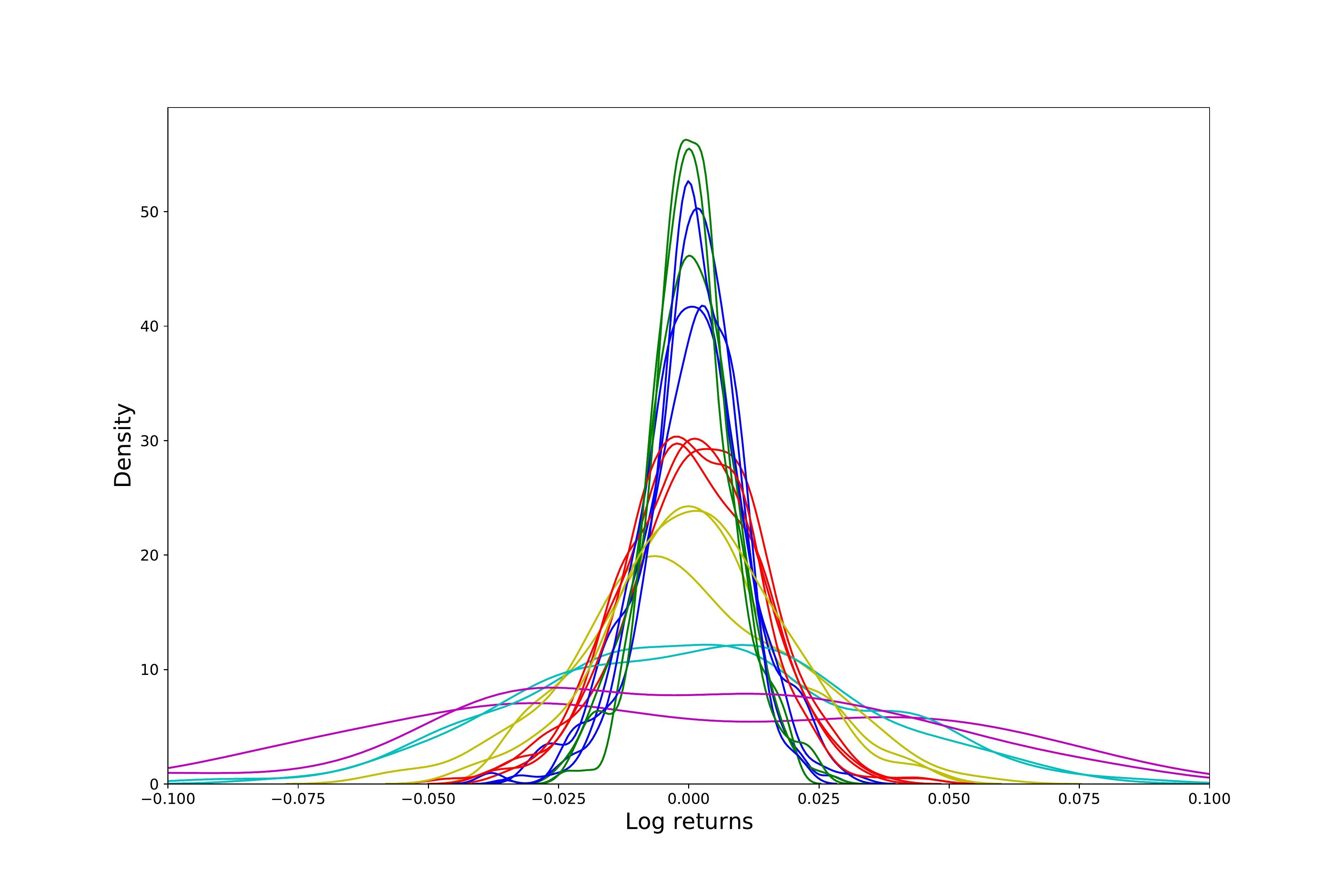}}}
\subfigure[Six determined volatility regimes for IJS]{%
\resizebox*{14cm}{!}{\includegraphics{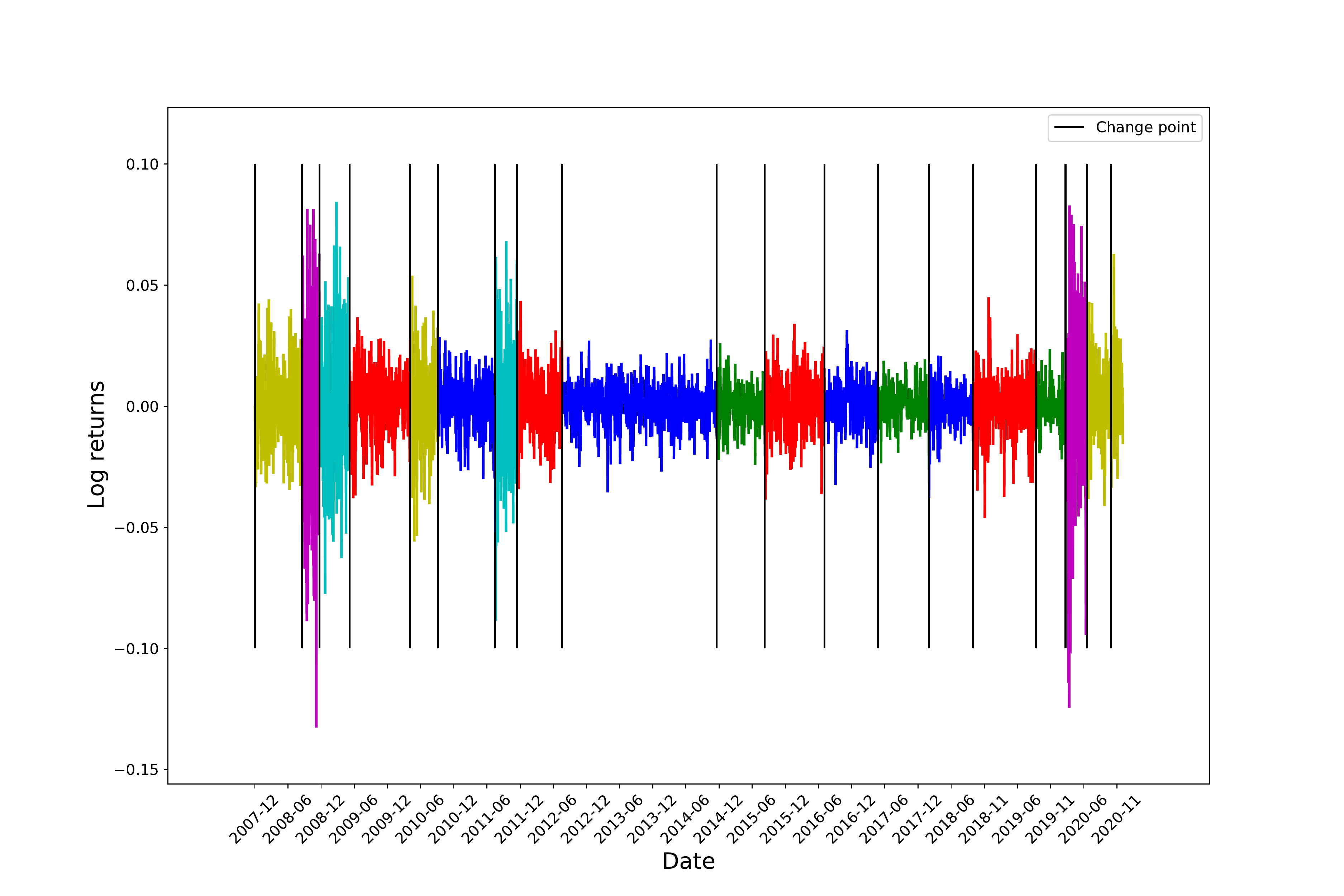}}}\hspace{5pt}
\caption{Volatility clustering results for IJS} \label{fig:ijs}
\end{figure}

\begin{figure}[htbp]
\centering
\subfigure[Kernel density estimation plots of RYT distributions]{%
\resizebox*{14cm}{!}{\includegraphics{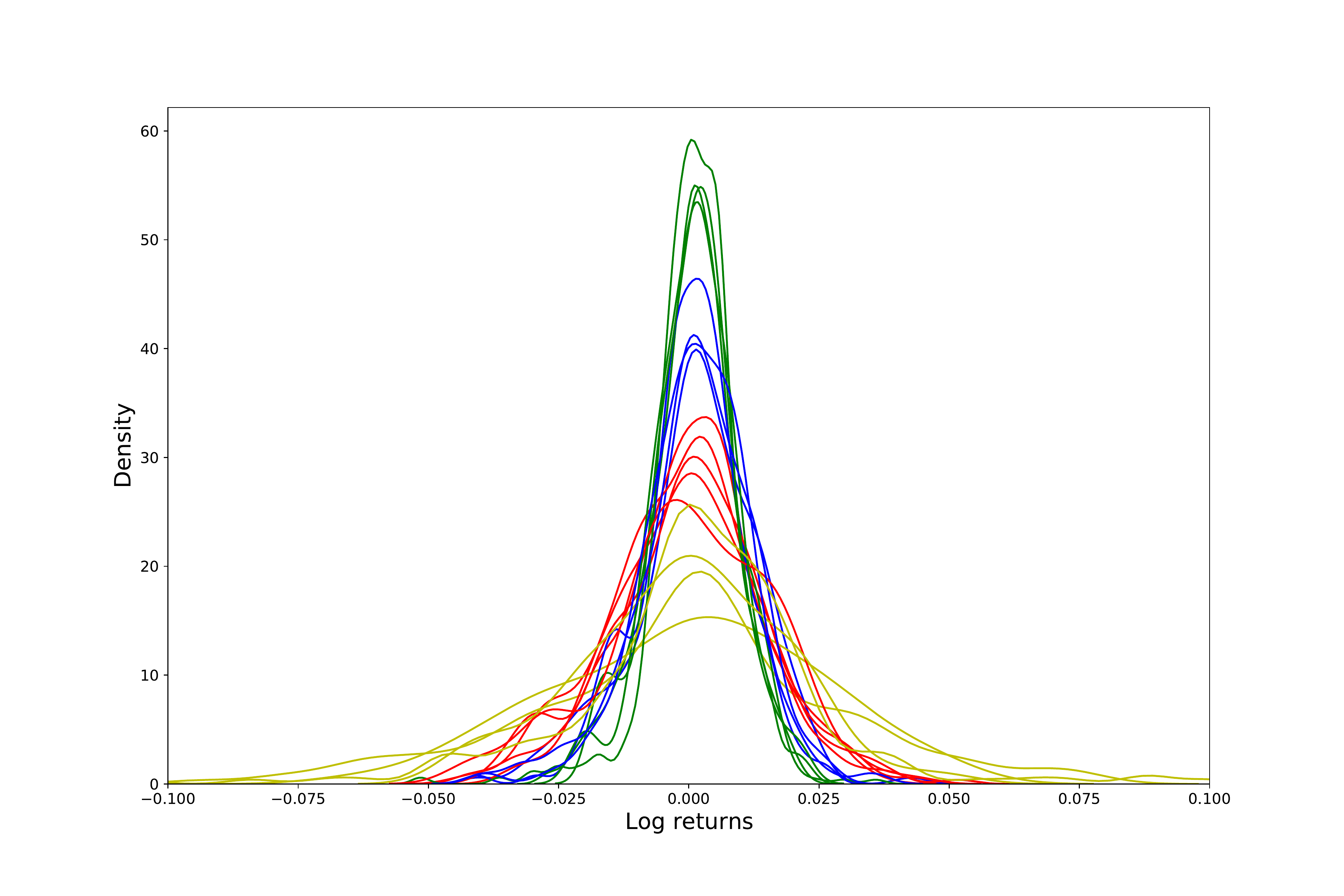}}}
\subfigure[Four determined volatility regimes for RYT]{%
\resizebox*{14cm}{!}{\includegraphics{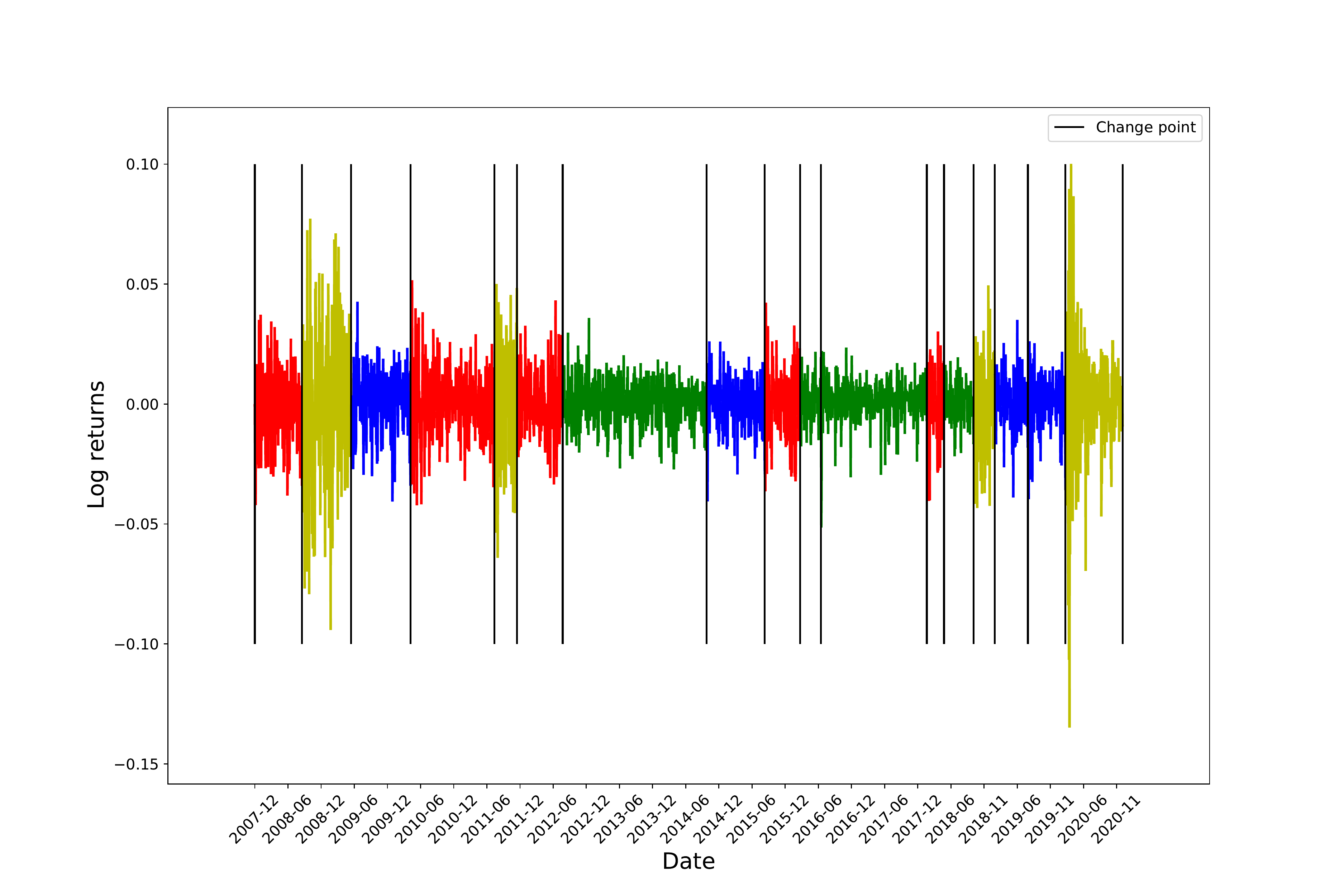}}}\hspace{5pt}
\caption{Volatility clustering results for RYT} \label{fig:ryt}
\end{figure}

\begin{figure}[htbp]
\centering
\subfigure[Kernel density estimation plots of AUD/USD distributions]{%
\resizebox*{14cm}{!}{\includegraphics{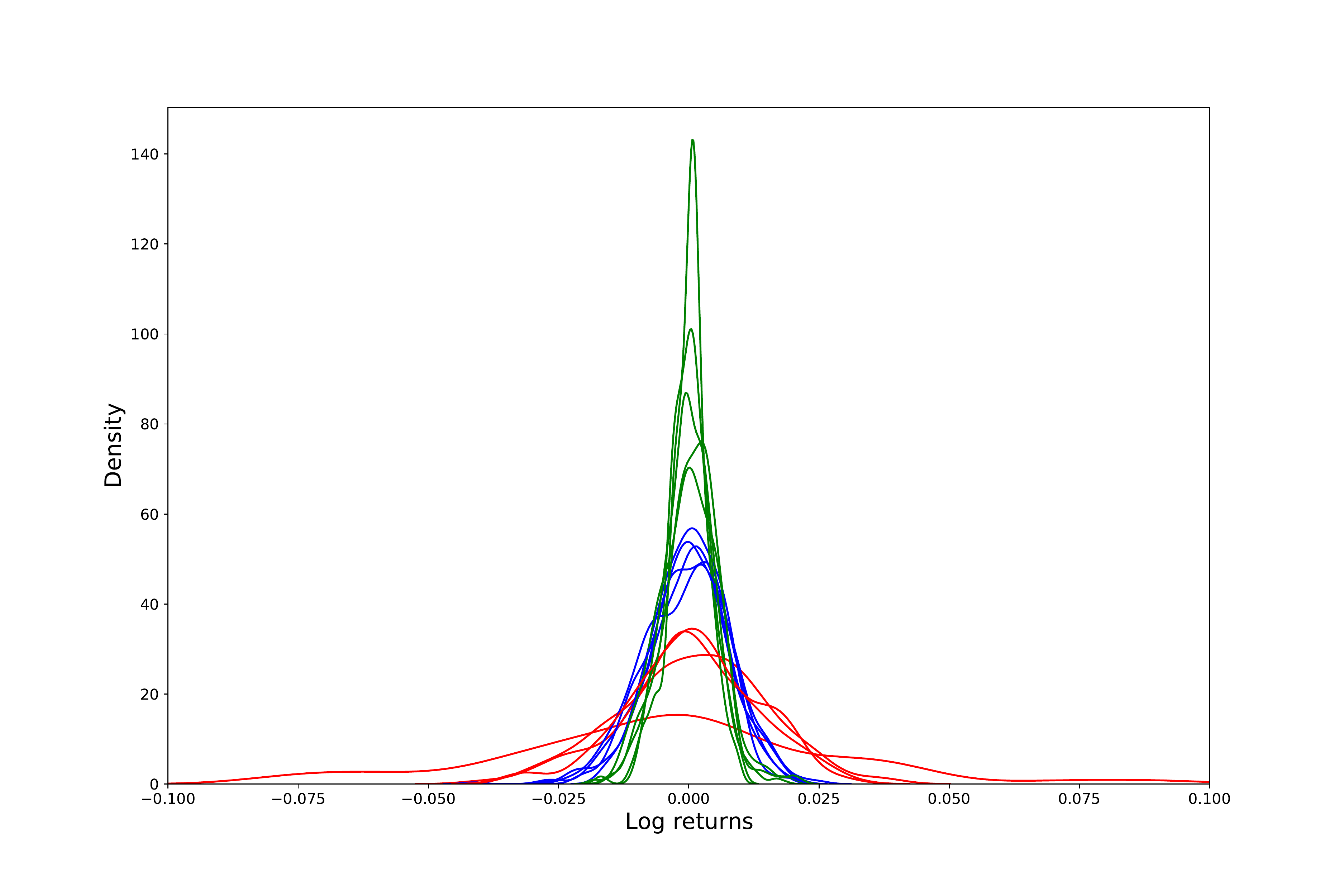}}}
\subfigure[Three determined volatility regimes for AUD/USD]{%
\resizebox*{14cm}{!}{\includegraphics{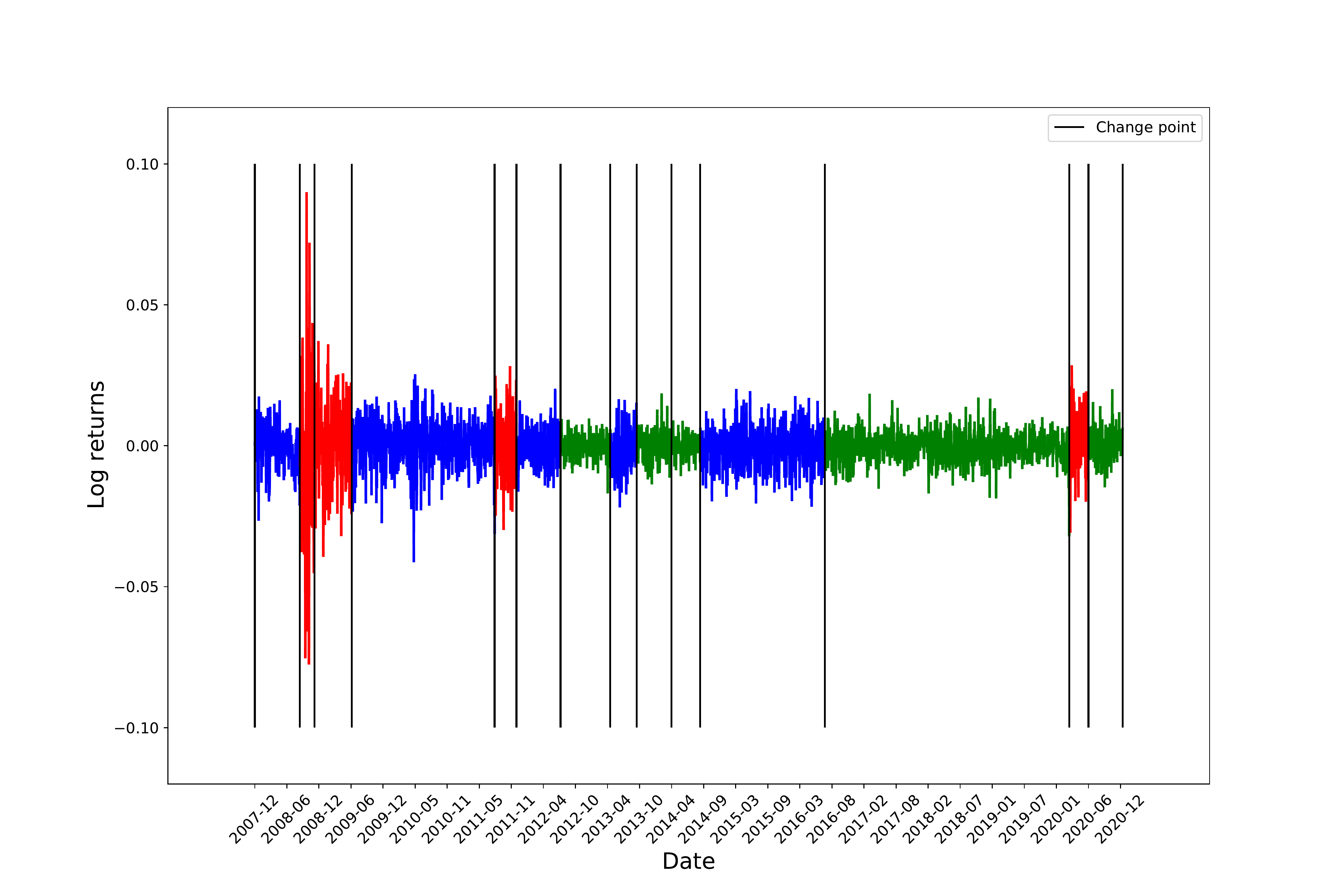}}}\hspace{5pt}
\caption{Volatility clustering results for AUD/USD} \label{fig:aud}
\end{figure}

\begin{figure}[htbp]
\centering
\subfigure[Kernel density estimation plots of GBP/USD distributions]{%
\resizebox*{14cm}{!}{\includegraphics{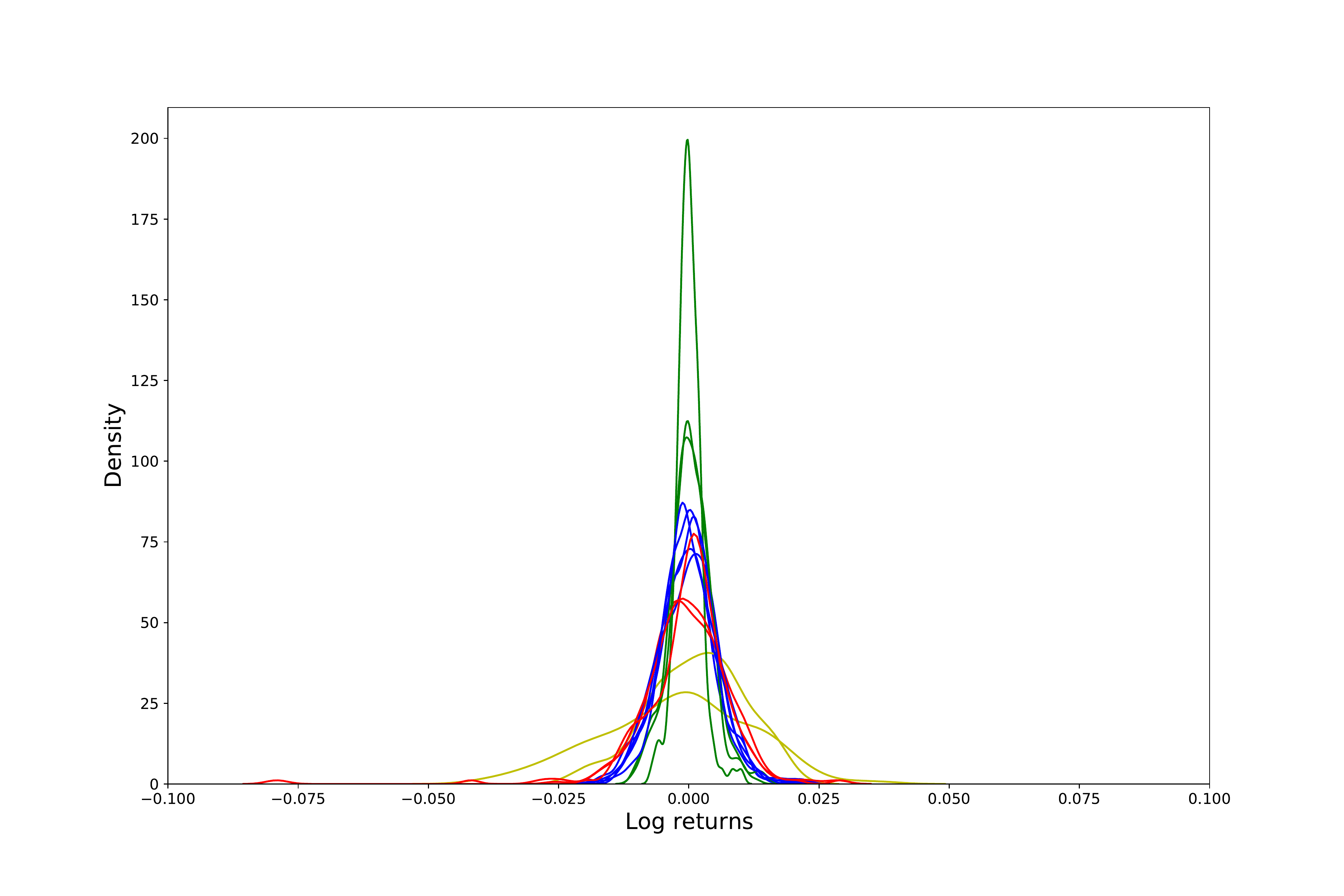}}}
\subfigure[Four determined volatility regimes for GBP/USD]{%
\resizebox*{14cm}{!}{\includegraphics{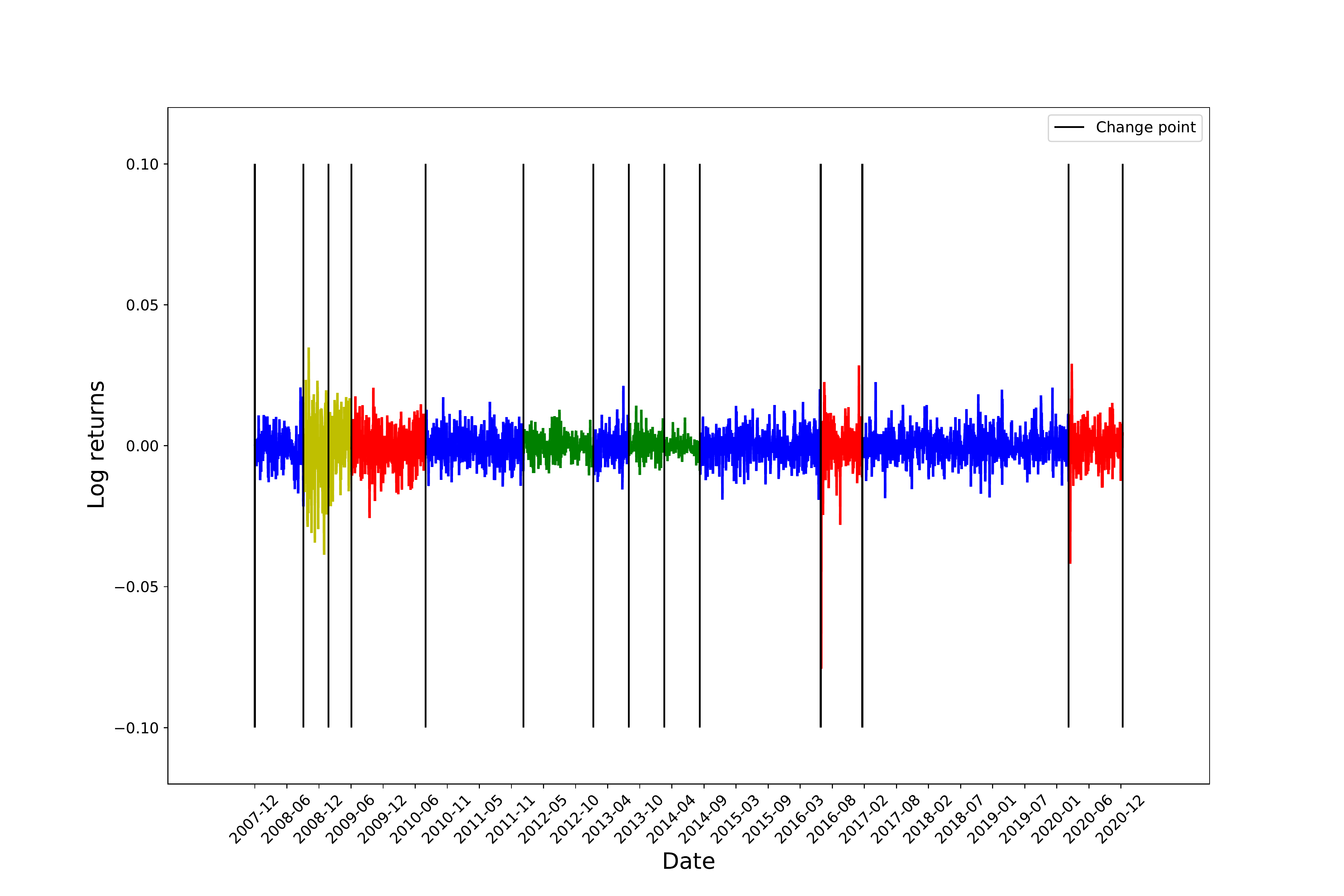}}}\hspace{5pt}
\caption{Volatility clustering results for GBP/USD} \label{fig:GBP}
\end{figure}

\begin{figure}[htbp]
\centering
\subfigure[Kernel density estimation plots of NZD/USD distributions]{%
\resizebox*{14cm}{!}{\includegraphics{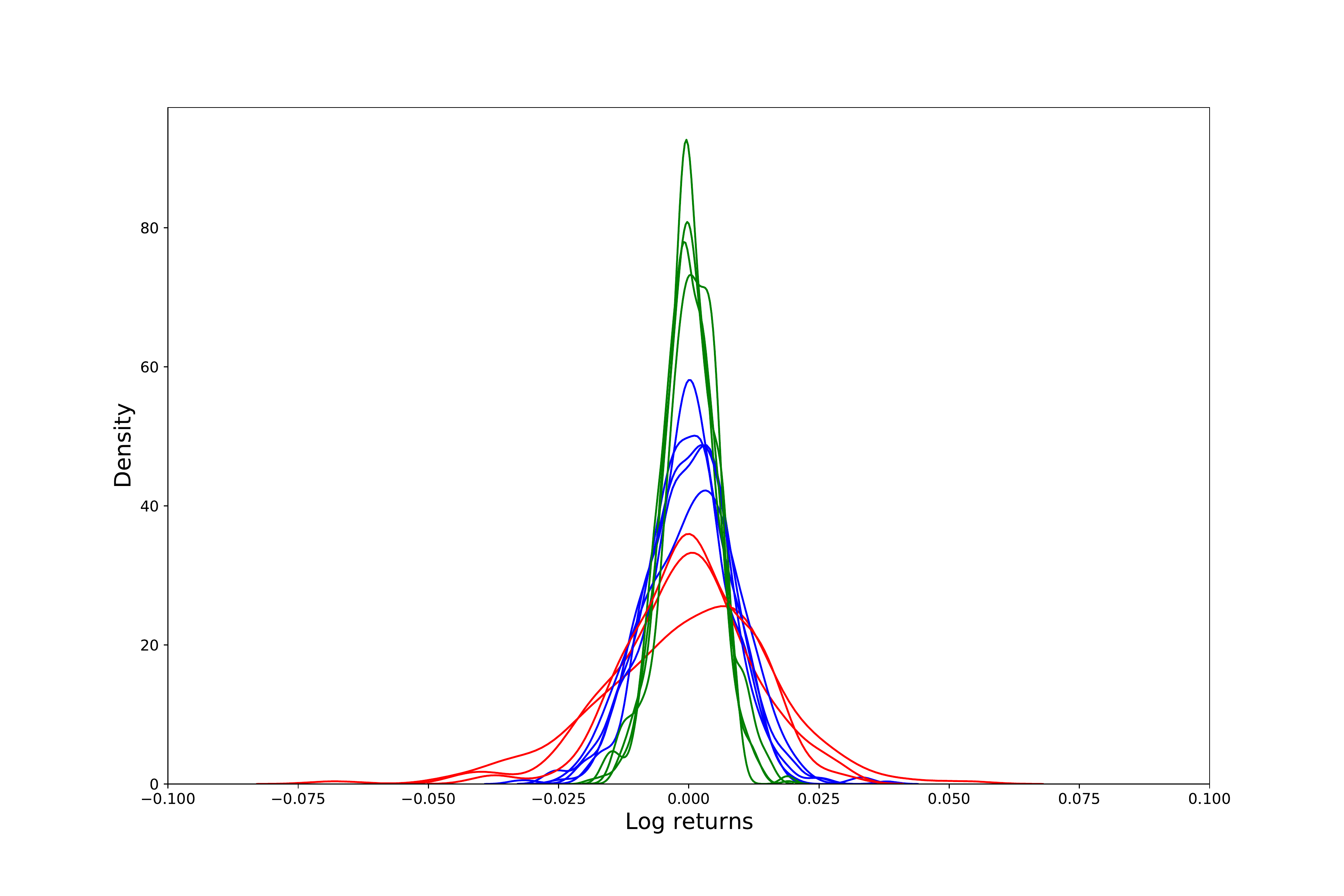}}}
\subfigure[Three determined volatility regimes for NZD/USD]{%
\resizebox*{14cm}{!}{\includegraphics{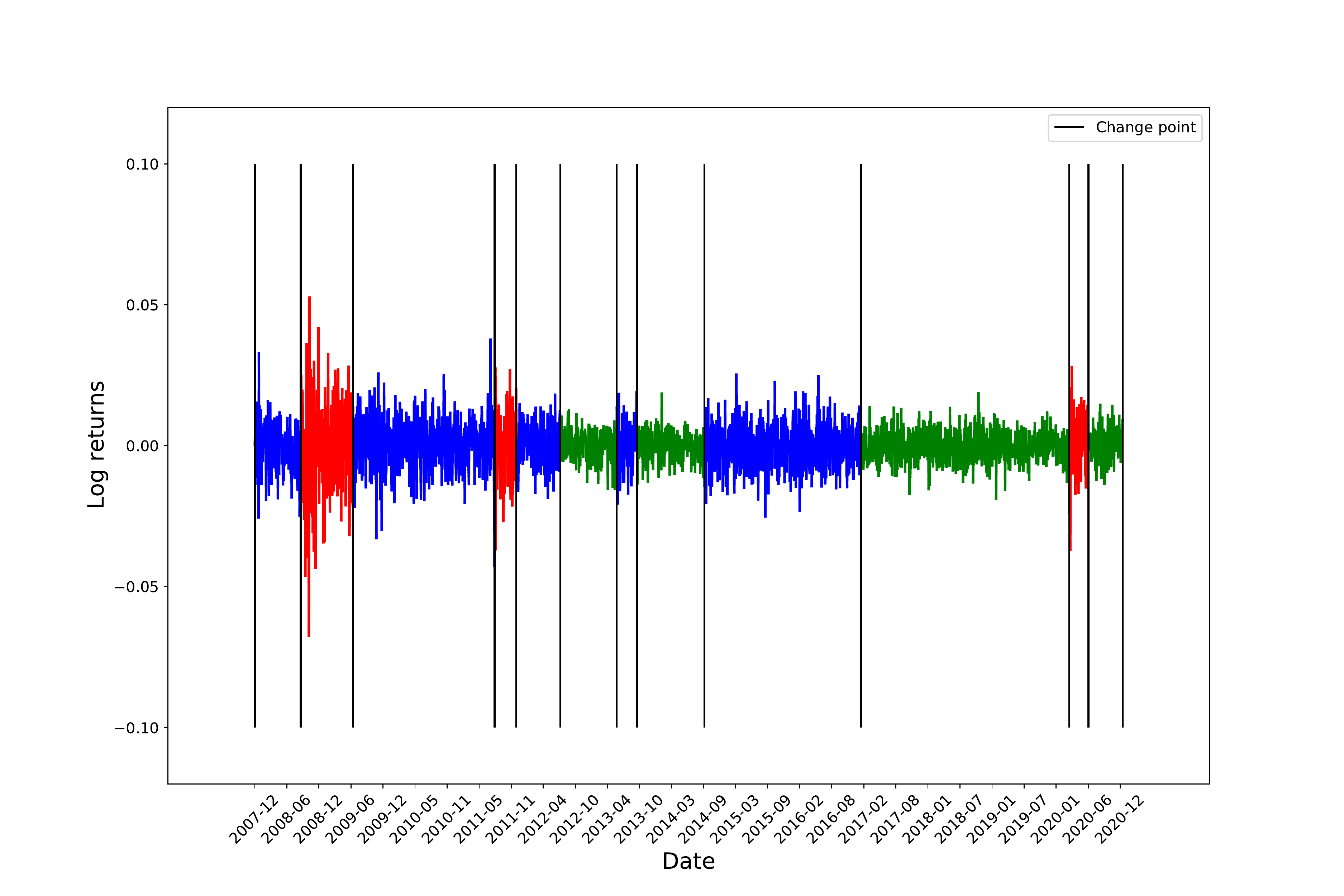}}}\hspace{5pt}
\caption{Volatility clustering results for NZD/USD} \label{fig:nzd}
\end{figure}

\clearpage

\bibliographystyle{tfcad}
\bibliography{_references}
\end{document}